\documentclass{amsart}

\usepackage{amssymb}
\usepackage{amsthm}
\usepackage{mathrsfs}
\usepackage{amsmath}
\usepackage{graphicx}
\usepackage{array}

\newtheorem{example}{Example}
\newtheorem{definition}{Definition}

\newtheorem{theorem}{Theorem}
\newtheorem{proposition}{Proposition}
\newtheorem{remark}{Remark}
\newtheorem{lemma}{Lemma}

\begin{document}

\title{Complex exceptional orthogonal polynomials and quasi-invariance}
\author[W.~A.~Haese-Hill]{William A.~Haese-Hill}
\address{Department of Mathematical Sciences\\ 
Loughborough University, Loughborough LE11 3TU, UK}
\author[M.~A.~Halln\"as]{Martin A.~Halln\"as}
\address{Department of Mathematical Sciences\\ 
Loughborough University, Loughborough LE11 3TU, UK}
\author[A.~P.~Veselov]{Alexander P.~Veselov}
\address{Department of Mathematical Sciences\\ 
Loughborough University, Loughborough LE11 3TU,  UK\\
and Moscow State University, Russia}

\date{\today}

\begin{abstract}
Consider the Wronskians of the classical Hermite polynomials 
$$H_{\lambda, l}(x):=\mathrm{Wr}(H_l(x),H_{k_1}(x)\ldots, H_{k_n}(x)), \quad l \in \mathbb Z_{\geq 0},$$
where $k_i=\lambda_i+n-i, \,\, i=1,\dots, n$ and $\lambda=(\lambda_1, \dots, \lambda_n)$ is a partition. 
G\'omez-Ullate et al showed that for a special class of partitions the corresponding polynomials are orthogonal and dense among all polynomials with certain inner product, but in contrast to the usual case have some degrees missing (so called exceptional orthogonal polynomials). We generalise their results to all partitions by considering complex contours of integration and non-positive Hermitian products. The corresponding polynomials are orthogonal and dense in a finite-codimensional subspace of $\mathbb C[x]$ satisfying certain quasi-invariance conditions. 
A Laurent version of exceptional orthogonal polynomials, related to monodromy-free trigonometric 
Schr\"odinger operators,  is also presented. 
\end{abstract}

\maketitle

\section{Introduction}
Consider polynomials $p_n(x) \in \mathbb R[x]$ of degrees $n=0,1,\dots$, satisfying the orthogonality relation
$$
	(p_m,p_n) =\delta_{mn}g_n,
$$
where the inner product of polynomials is defined by a real integral
\begin{equation}
\label{real}
(p,q):= \int_a^b p(x) q(x) w(x) dx
\end{equation}
for some positive weight function $w.$
Suppose that there exists a second order differential operator 
$$
T= A(x)\frac{d^2}{dx^2}+B(x)\frac{d}{dx}+C(x)
$$
having these polynomials as eigenvectors:
$$
T p_n(x)=E_n p_n(x), \quad n=0,1,\dots.
$$
A classical result due to  Bochner \cite{Bochner} says that in that case the sequence of polynomials $p_n(x), \, n \in \mathbb Z_{\geq 0}$, must coincide (up to a linear change of $x$) with one of the systems of classical orthogonal polynomials of Hermite, Laguerre or Jacobi.

G\'omez-Ullate, Kamran and Milson \cite{UKM-boch} considered the following variation of Bochner's question.
Let us assume now that in the previous considerations $n$ belongs to a certain proper subset $S \subset \mathbb Z_{\geq 0}$ such that $\mathbb Z_{\geq 0} \setminus S$ is finite. To make this non-trivial they added the following {\it density condition}: the linear span $U=\langle p_n:  n \in S \rangle $ of the corresponding polynomials must be dense in $\mathbb R[x]$ in the sense that if $(p, p_n)=0$ for all $n \in S$ then $p\equiv 0.$ In that case the sequence $p_n(x), \, n \in S$ is called a system of {\it exceptional orthogonal polynomials.}

The main example of such polynomials are exceptional Hermite polynomials \cite{UGM-EHP} having the Wronskian form
\begin{equation}
\label{excH}
H_{\lambda, l}(x):=\mathrm{Wr}(H_l(x),H_{k_1}(x)\ldots, H_{k_n}(x)), \quad l \in \mathbb Z_{\geq 0},
\end{equation}
where $H_l(x)$ are classical Hermite polynomials, $\lambda=(\lambda_1, \dots, \lambda_n)$ is a double partition and 
$$
k_i=\lambda_i+n-i, \,\, i=1,\dots, n.
$$
The double partitions have the very special form $$\lambda=\mu^2=(\mu_1,\mu_1, \mu_2, \mu_2, \dots, \mu_k, \mu_k),$$ where $\mu=(\mu_1,\mu_2, \dots, \mu_k)$ is another partition with $n=2k$ (see \cite{FHV-zeroes}). According to Krein and Adler \cite{Adler} this guarantees that the corresponding Wronskian
\begin{equation}
\label{W}
W_\lambda(x) = \mathrm{Wr}(H_{k_1}(x)\ldots, H_{k_n}(x))
\end{equation}
 has no zeroes on the real line and thus determines a non-singular weight function
\begin{equation}\label{w}
w(x)=W_\lambda^{-2}(x)e^{-x^2}.
\end{equation}
The geometry of the complex zeroes of the corresponding Wronskians is quite interesting and was studied by Felder et al.~in \cite{FHV-zeroes}.

One of the goals of our paper is to find a proper interpretation of the exceptional Hermite polynomials (\ref{excH}) for all partitions $\lambda.$
As we will see, this will naturally lead us to the notion of {\it quasi-invariance}, which appeared in the theory of {\it monodromy-free Schr\"odinger operators}, going back to Picard and Darboux and more recently revisited by Duistermaat and Gr\"unbaum \cite{DG-spec}. In certain classes such operators were explicitly described in terms of Wronskians in \cite{DG-spec,CFV-Huygen, Obl, GV}. 
Grinevich and Novikov  studied the spectral properties of these and more general singular finite-gap operators and emphasized the important link with the theory of Pontrjagin spaces (see \cite{GN} and references therein).
Our paper can be considered as dealing with the implications of all these results for the theory of exceptional orthogonal polynomials. 

More precisely, we first complexify the picture by considering the vector space $V=\mathbb C[z]$ and replace the inner product (\ref{real}) by a Hermitian product
of the form 
\begin{equation}
\label{compl}
\langle p,q \rangle:= \int_C p(z) {\bar q}(z) w(z) dz,
\end{equation}
where ${\bar q}(z):=\overline{q(\bar z)}$ is the Schwarz conjugate of the polynomial $q(z)$, $C \subset \mathbb C$ is a contour in the complex domain and $w(z)$ is a complex weight function. The condition that this product is Hermitian 
implies certain restrictions on the contour $C$ and function $w(z)$ (see section 2). It also requires certain restrictions on the set of polynomials for which the product is well defined. As it turned out, such polynomials form a subspace $U \subset V$ of finite codimension defined by some quasi-invariance conditions. 
Similarly to \cite{UKM-boch} we say that the polynomials $p_n(z), \, n \in S$, form a system of {\it complex exceptional orthogonal polynomials} if their linear span is a subspace of $U$ that is dense in $U$ in the sense that $\langle p, p_n \rangle=0$ for all $n \in S$, implies that $p \equiv 0.$

We will show that the Wronskians (\ref{excH})  satisfy this criteria for every partition $\lambda$ and a suitable choice of $C$ with $w$ given by \eqref{w}. For a double partition $\lambda$ we can take as a contour $C$ the real line with $U=V$ and recover the results by G\'omez-Ullate et al \cite{UGM-EHP}. 

Note that the corresponding Hermitian form is positive definite only for double partitions, otherwise we always have polynomials with negative norms. The appearance of negative norms for singular potentials was first emphasized by Grinevich and Novikov \cite{GN}.

We also consider the Laurent version of our approach. Some Laurent versions of orthogonal polynomials are already known in the literature (see e.g.~\cite{OLP} and references therein), but our approach is different since it is not based on Gram-Schmidt procedure. Similarly, it does not fit into the theory of orthogonal polynomials on the unit circle initiated by Szeg\"o \cite{Szego}, who considered the case of usual polynomials.

Consider the Laurent polynomials $\Lambda=\mathbb C [z, z^{-1}]$ and the following complex bilinear form on $\Lambda$:
$$
(P,Q)=\frac{1}{2\pi i}\oint_{C} P(z)Q(z) \frac{\, dz}{z}
$$
where $C=\{z \in \mathbb C: |z|=1\}$ is the unit circle. The standard basis $z^n, \, n \in \mathbb Z$, satisfies the {\it Laurent orthogonality relation}
$$
(z^k, z^l)=\delta_{k+l,0}, \quad k,l \in \mathbb Z.
$$

We consider more general forms
\begin{equation}
\label{complex_euclid}
\left(P,Q\right)=\frac{1}{2\pi i}\oint_{C_\mu} P(z)Q(z) w(z) \frac{\, dz}{z},
\end{equation}
where $C_\mu$ is the circle defined by $|z|=\mu$ and $w(z)=W(z)^{-2}$, with $W(z)$ some Laurent polynomial. For this form to be well-defined, we need to assume that $P, Q$ belong to a suitable subspace of quasi-invariants $\mathscr{Q} \subset \Lambda$ of finite codimension.



Let $\mathcal K$ be a finite subset of $\mathbb N$. Suppose that $P_n \in \Lambda, \, n \in \mathbb Z$,  satisfy the Laurent orthogonality relation
\begin{equation}
\label{lort}
\left(P_k, P_l\right)=\delta_{k+l,0} h_k, \quad k,l \in \mathbb Z,
\end{equation}
but $P_n$ is proportional to $P_{-n}$ for $n\in\mathcal K$, which implies that the corresponding $h_n=0$, and thus $P_n$ is orthogonal to all $P_k, \, k \in \mathbb Z.$ If the minimal complex Euclidean extension of the linear span of $P_n, \, n \in \mathbb Z$, coincides with the subspace of quasi-invariants $\mathscr{Q}$, then we call them {\it exceptional Laurent orthogonal polynomials}. The need to consider such an extension is the novelty of the Laurent case, which is related to the fact that the corresponding form is degenerate on the linear span of $P_n,\, n \in \mathbb Z$.

We present an example of such polynomials corresponding to the trigonometric monodromy-free Schr\"odinger operators \cite{CFV-Huygen}. Namely, for any set $\kappa=\{k_1, \dots, k_n\}$ of distinct natural numbers $k_1>k_2>\dots >k_n>0$ and any choice of complex parameters $a=(a_1,\dots, a_n), \, a_k \in \mathbb C \setminus \{0\}$, we define the Laurent polynomials
\begin{equation}
\label{excL}
	P_{\kappa,a; l}(z) =
 		\begin{vmatrix}
  			\Phi_{k_1}(a_1;z)			&	\Phi_{k_2}	(a_2;z)		&	\cdots 	&	\Phi_{k_n}(a_n;z) 			&	z^l			\\
 			D \Phi_{k_1}(a_1;z)		&	D \Phi_{k_2}(a_2;z) 		&	\cdots 	&	D \Phi_{k_n}(a_n;z) 		&	D z^{l}		\\
 			\vdots  			& 	\vdots 			& 	\ddots 	& 	\vdots  			&	\vdots			\\
 			D^{n}\Phi_{k_1}(a_1;z)	&	D^{n} \Phi_{k_2}(a_2;z) 	&	\cdots 	&	D^{n} \Phi_{k_n}(a_n;z) 	&	D^{n} z^{l}
 		\end{vmatrix}
\end{equation}
where $\Phi_k(a;z)=a z^k+a^{-1} z^{-k}, \, k \in \mathbb N$ and $D=z\frac{d}{dz}.$ 

When parameters $a_k$ satisfy the condition $|a_k|=1$ for all $k=1,\dots n$ we introduce a Hermitian form on a certain subspace of quasi-invariant Laurent polynomials $\mathscr{Q}_{\kappa,C}$ and show that the minimal Hermitian extension of the linear span of $P_{\kappa,a; l}$, $l\in\mathbb Z$, coincides with the subspace of quasi-invariants $\mathscr{Q}_{\kappa}$ and is dense in $\mathscr{Q}_{\kappa,C}$.

\section{Complex exceptional Hermite polynomials}\label{Sec2}

In this section we consider the polynomials $H_{\lambda,l}$, as defined in \eqref{excH}, for general partitions, i.e., we do not require that $\lambda$ is a double partition. We shall refer to these polynomials as {\it complex exceptional Hermite polynomial} or CEHPs for short. 
Although these polynomials have real coefficients for the partitions which are not double it is natural to consider them as elements of the complex Hermitian vector space because the contour of integration in (\ref{compl}) is complex.


We begin by recalling how $H_{\lambda,l}$ are obtained by a sequence of Darboux transformations from the classical Hermite polynomial $H_l$. The starting point is the classical fact that the functions
\begin{equation}
\label{psil}
\psi_l(z)=H_l(z)e^{-z^2/2},\ \ \ l\in\mathbb{Z}_{\geq 0}
\end{equation}
have the eigenfunction property
$$
\mathscr{L}\psi_l\equiv -\frac{d^2\psi_l}{dz^2}+z^2\psi=(2l+1)\psi_l,\ \ \ z\in\mathbb{C},
$$
and satisfy the boundary conditions
$$
\lim_{\mathrm{Re}\, z\to\pm\infty}\psi_l(z)=0
$$
(for any fixed value of $\mathrm{Im}\, z$). We will choose the normalisation of Hermite polynomials such that the highest coefficient of $H_l(z)$ is $2^l$ and all the coefficients are integer:
$$H_0=1, \, H_1=2z, \, H_2=4z^2-2, \, H_3=8z^3-12z, \, H_4=16z^4-48z^2+12,\dots .$$
%
%

As is well known, after $n$ consecutive Darboux transformations at the levels $k_n<k_{n-1}<\cdots<k_1$, where $k_i=\lambda_i+n-i, \, i=1,\dots,n,$ one arrives at the Schr\"odinger operator
\begin{equation}
\label{LlamH}
\mathscr{L}_\lambda=-\frac{d^2}{dz^2}-2\frac{d^2}{dz^2}\big(\log {\rm Wr}(\psi_{k_1},\ldots,\psi_{k_n})\big)+z^2,
\end{equation}
which satisfies the intertwining relation
$$
\mathscr{D}_\lambda\circ \mathscr{L}=\mathscr{L}_\lambda\circ \mathscr{D}_\lambda,
$$
where the intertwining operator $\mathscr{D}_\lambda$ acts according to
\begin{equation}\label{inttw}
\mathscr{D}_\lambda\psi=\frac{{\rm Wr}(\psi,\psi_{k_1},\ldots,\psi_{k_n})}{{\rm Wr}(\psi_{k_1},\ldots,\psi_{k_n})},
\end{equation}
see e.g.~\cite{Adler,Crum}. It follows that the functions
\begin{equation}
\label{psilaml}
\psi_{\lambda,l}=\frac{{\rm Wr}(\psi_l,\psi_{k_1},\ldots,\psi_{k_n})}{{\rm Wr}(\psi_{k_1},\ldots,\psi_{k_n})},\ \ \ l\notin{k_1,\ldots,k_n},
\end{equation}
have the eigenfunction property
$$
\mathscr{L}_\lambda\psi_{\lambda,l}=(2l+1)\psi_{\lambda,l}.
$$

Substituting \eqref{psil} in \eqref{psilaml} and using the general property ${\rm Wr}(gf_1,\ldots,gf_n)=g^n{\rm Wr}(f_1,\ldots,f_n)$, one finds that
\begin{equation}
\label{psiH}
\psi_{\lambda,l}=H_{\lambda,l}\frac{e^{-x^2/2}}{W_\lambda},
\end{equation}
where $H_{\lambda,l}$ are given by \eqref{excH}.

By a direct computation, it is readily inferred that $H_{\lambda,l}$ is an eigenfunction of the operator
$$
T_\lambda=-\frac{d^2}{dz^2}+2\left(z+\frac{W_\lambda^\prime}{W_\lambda}\right)\frac{d}{dz}-\left(\frac{W_\lambda^{\prime\prime}}{W_\lambda}+2z\frac{W_\lambda^\prime}{W_\lambda}\right)
$$
with eigenvalue $2l+1$.

\begin{example}
Consider the special case $\lambda=(1)$, which corresponds to the Schr\"odinger operator
\begin{align*}
\mathscr{L}_{(1)}&=-\frac{d^2}{dz^2}-2\frac{d^2}{dz^2}\big(\log(2ze^{-z^2/2})\big)+z^2\\
&=-\frac{d^2}{dz^2}+z^2+\frac{2}{z^2}+2.
\end{align*}
Already in this simple example, we obtain eigenfunctions \eqref{psilaml} with a singularity on the real line (at $z=0$). Indeed, this can be seen explicitly by writing out the first exceptional Hermite polynomials $H_{(1),k}=\mathrm{Wr}(\psi_k,\psi_1)$ and the corresponding few eigenfunctions $\psi_{(1),k}=\frac{\mathrm{Wr}(\psi_k,\psi_1)}{\psi_1}:$
\renewcommand\arraystretch{2}
$$
\begin{array}{l r}
H_{(1),0}=1, \quad \quad &\psi_{(1),0}=\frac{1}{z}e^{-z^2/2}, \\
 H_{(1),2}=-(2+4z^2), \quad \quad &\psi_{(1),2}=-\frac{2+4z^2}{z}e^{-z^2/2},\\
H_{(1),3}=-16z^3, \quad \quad &\psi_{(1),3}=-16z^2e^{-z^2/2},\\
H_{(1),4}=12(1+4z^2-4z^4), \quad \quad &\psi_{(1),4}=\frac{12(1+4z^2-4z^4)}{z}e^{-z^2/2},\\
H_{(1),5}=64z^3(5-2z^2), \quad \quad &\psi_{(1),5}=64z^2(5-2z^2)e^{-z^2/2},\\
H_{(1),6}=40(3+18z^2-36z^4+8z^6), \quad \quad &\psi_{(1),6}=-\frac{40(3+18z^2-36z^4+8z^6)}{z}e^{-z^2/2}.
\end{array}
$$
More generally, using the fact that
$$
\mathrm{Wr}(\psi_{2l+1},\psi_1)(-z)=-\mathrm{Wr}(\psi_{2l+1},\psi_1)(z),\ \ \ l\in\mathbb{N},
$$
as well as the fact that each classical Hermite polynomial $H_{2l}(z)$, $l\in\mathbb{Z}_{\geq 0}$, has a nonzero constant term, it is readily seen that
$\psi_{(1),l}(x)$ is regular on the whole real line if and only if $l\in\mathbb{Z}_{\geq 0}\setminus\{1\}$ is odd. The eigenvalues of the first few eigenfunctions are given in Figure \ref{Fig:spect}, where open and filled circles indicate that the corresponding eigenfunctions are singular and non-singular, respectively. In addition, the cross represents the eigenvalue removed by the Darboux transformation.

Note that in the theory of quantum Calogero-Moser systems (of which this example is the simplest case) only non-singular solutions are considered (see e.g. \cite{OP}).

\begin{figure}
\label{Fig:spect}
\begin{center}
\includegraphics[width=1.0\textwidth]{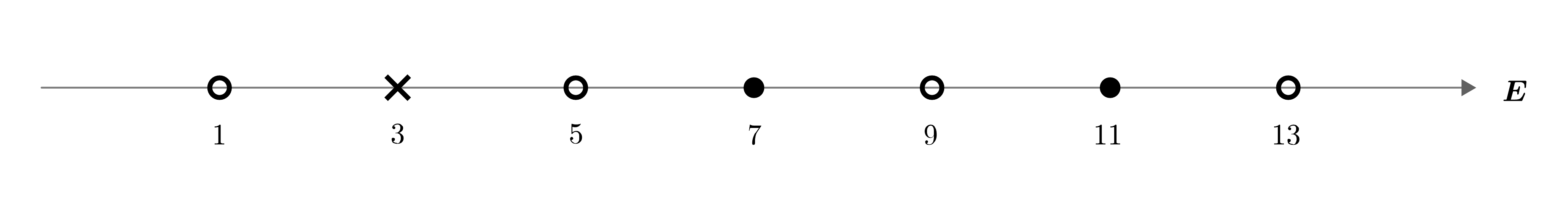}
\end{center}
\caption{The eigenvalues of the first few eigenfunctions for $\lambda=(1)$.}
\end{figure}
%
\end{example}

We will now use the fact that $\mathscr{D}_\lambda$ is obtained as the composition of first order intertwining operators. To be more specific, let us introduce the short hand notation
$$
\mathscr{W}_m={\rm Wr}(\psi_{k_m},\ldots,\psi_{k_n}),\ \ \ \mathscr{W}_m(\psi)={\rm Wr}(\psi,\psi_{k_m},\ldots,\psi_{k_n}),
$$
(where it is convenient to allow $m=n+1$ and set $\mathscr{W}_{n+1}=1$, $\mathscr{W}_{n+1}(\psi)=\psi$), and recall the standard identity
$$
\mathscr{W}_{m-1}\mathscr{W}_m(\psi)=\mathscr{W}_m\frac{d}{dx}\mathscr{W}_{m-1}(\psi)-\mathscr{W}_{m-1}(\psi)\frac{d}{dx}\mathscr{W}_m,\ \ \ m\geq 1.
$$
Then it is readily verified that
\begin{equation}
\label{Dfact}
\mathscr{D}_\lambda=D_1\circ\cdots\circ D_m\circ\cdots\circ D_n,
\end{equation}
with
\begin{equation}
\label{Dm}
D_m=\frac{d}{dz}-\frac{d}{dz}\left(\log\frac{\mathscr{W}_m}{\mathscr{W}_{m+1}}\right).
\end{equation}

For our purposes, a key notion is that of trivial monodromy, see e.g.~\cite{V-Stieltjes}. A Schr\"odinger operator $\mathscr{L}=-d^2/dz^2+u(z)$, whose potential $u$ is a meromorphic function of $z$, is said to have \emph{trivial monodromy} if all solutions of its eigenvalue equation
\begin{equation}
\label{LEq}
\mathscr{L}\psi(z)=E\psi(z)
\end{equation}
are meromorphic in $z$ for all $E$.

We recall that every monodromy-free Schr\"odinger operator $\mathscr{L}$ with a quadratically increasing rational potential is of the form \eqref{LlamH} for some partition $\lambda$. The fact that each Schr\"odinger operator $\mathscr{L}_\lambda$ has trivial monodromy is easily seen. Indeed, in the special case $u(z)=z^2$ all eigenfunctions are entire, and trivial monodromy is preserved under (rational) Darboux transformations. The converse result is due to Oblomkov \cite{Obl}.

Duistermaat and Gr\"unbaum \cite{DG-spec} obtained local conditions for trivial monodromy. Specifically, in a neighbourhood of a pole $z=z_i$ the potential $u(z)$ must have a Laurent series expansion of the form
$$
u(z)=\sum_{r\geq -2} c_r(z-z_i)^r,
$$
with 
$$
c_{-2}=m_i(m_i+1)\ \ \ \text{for some}\ \ \ m_i\in\mathbb{N},
$$
and
$$
c_{2j-1}=0,\ \ \ \forall j=0,1,\ldots,m_i.
$$
In addition, every eigenfunction $\psi$ has a Laurent series expansion of the form
$$
\psi(z)=(z-z_i)^{-m_i}\sum_{r=0}^\infty d_r(z-z_i)^r,
$$
with
$$
d_{2j-1}=0,\ \ \ \forall j=1,\ldots,m_i.
$$

We proceed to consider the implications for the CEHPs $H_{\lambda,l}$. Let $Z_\lambda$ be the set of zeros $z_i\in\mathbb{C}$ of the Wronskian $W_\lambda(z)$ with multiplicities $m_i\in\mathbb{N}$. In addition, we need the subset $Z_\lambda^\mathbb{R}\subset Z_\lambda$ obtained by restriction to $z_i\in\mathbb{R}$. We say that a meromorphic function $\psi(z)$ is {\it quasi-invariant} at the point $z=z_i$ with multiplicity $m_i$ if it satisfies the following two conditions:
\begin{enumerate}
\item $\psi(z)(z-z_i)^{m_i}$ is analytic at $z=z_i$,
\item $(\psi(z)(z-z_i)^{m_i})^{(2j-1)}\arrowvert_{z=z_i}=0$, for all $j=1,\ldots,m_i$.
\end{enumerate}
The second condition can be rewritten as
$$
\psi(\sigma_i(z))=(-1)^{m_i}\psi(z)+O((z-z_i)^{m_i}),
$$
where $\sigma_i(z)=2z_i-z$ is the reflection with respect to $z_i.$ This explains the terminology.

Introducing the subspace
$$
\mathcal{Q}_\lambda=\left\lbrace p\in\mathbb{C}[z] : \psi(z):=p(z)\frac{e^{-z^2/2}}{W_\lambda(z)}~\textrm{is quasi-invariant at}~z=z_i, \forall z_i\in Z_\lambda\right\rbrace,
$$
it follows from the above that the $\mathbb C$-linear span 
$$\mathcal{U}_\lambda=\langle H_{\lambda,l} : l\in\mathbb{Z}_{\geq 0}\setminus\lbrace k_1,\ldots,k_n\rbrace\rangle$$
belongs to $\mathcal{Q}_\lambda$. From Proposition 5.3 in \cite{UGM-EHP}, we recall that the codimension  of $\mathcal{U}_\lambda$ in $\mathbb{C}[z]$ is equal to $|\lambda|$. On the other hand, $|\lambda|$ is the degree of $W_\lambda(z)$, and therefore the number of quasi-invariance conditions that any $p\in\mathcal{Q}_\lambda$ should satisfy. This yields the converse inclusion, and thus the following result.

\begin{proposition}\label{Prop:Hbasis}
The $\mathbb C$-linear span of CEHPs coincides with polynomial quasi-invariants:
$$
\mathcal{U}_\lambda=\mathcal{Q}_\lambda.
$$
\end{proposition}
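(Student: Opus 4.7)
The plan is to prove the two inclusions $\mathcal{U}_\lambda\subseteq\mathcal{Q}_\lambda$ and $\mathcal{Q}_\lambda\subseteq\mathcal{U}_\lambda$ by a codimension-matching argument in $\mathbb{C}[z]$, leaning heavily on the material already assembled in the excerpt.

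For the forward inclusion I would invoke the Duistermaat--Gr\"unbaum local analysis just recalled. Since $\mathscr{L}_\lambda$ is obtained from $\mathscr{L}=-d^2/dz^2+z^2$ by iterated rational Darboux transformations \eqref{Dfact}--\eqref{Dm}, and trivial monodromy is preserved under such transformations, $\mathscr{L}_\lambda$ has trivial monodromy. Consequently every eigenfunction $\psi_{\lambda,l}$ admits, at each pole $z_i$ of the potential, a Laurent expansion of the form $(z-z_i)^{-m_i}\sum d_r(z-z_i)^r$ with $d_{2j-1}=0$ for $j=1,\ldots,m_i$. The poles of the potential occur precisely at the zeros of $W_\lambda$ with the same multiplicities $m_i$, and rewriting $\psi_{\lambda,l}=H_{\lambda,l}e^{-z^2/2}/W_\lambda$ translates the vanishing of these odd Laurent coefficients into conditions (1)--(2) on $H_{\lambda,l}$. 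Hence $\mathcal{U}_\lambda\subseteq\mathcal{Q}_\lambda$.

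Next I would bound the codimension of $\mathcal{Q}_\lambda$. For $p\in\mathbb{C}[z]$, condition (1) at $z_i$ is automatic: writing $W_\lambda(z)=(z-z_i)^{m_i}\tilde W_i(z)$ with $\tilde W_i(z_i)\neq 0$, the product $\psi(z)(z-z_i)^{m_i}=p(z)e^{-z^2/2}/\tilde W_i(z)$ is manifestly holomorphic at $z_i$. Condition (2) then imposes the $m_i$ linear functionals
$$
L_{i,j}\colon p\longmapsto\left.\frac{d^{2j-1}}{dz^{2j-1}}\!\left(\frac{p(z)\,e^{-z^2/2}}{\tilde W_i(z)}\right)\right|_{z=z_i},\qquad j=1,\ldots,m_i,
$$
so that $\mathcal{Q}_\lambda$ is the common kernel of $\sum_{z_i\in Z_\lambda}m_i$ such functionals. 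Since $\deg W_\lambda=|\lambda|$ by the standard Wronskian-degree computation using $k_i=\lambda_i+n-i$ (which also shows the leading coefficient is nonzero because the $k_i$ are distinct), we obtain $\mathrm{codim}_{\mathbb{C}[z]}\mathcal{Q}_\lambda\leq|\lambda|$.

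Proposition 5.3 of \cite{UGM-EHP} gives $\mathrm{codim}_{\mathbb{C}[z]}\mathcal{U}_\lambda=|\lambda|$; combined with $\mathcal{U}_\lambda\subseteq\mathcal{Q}_\lambda$ and the bound above, both codimensions must equal $|\lambda|$ and the two subspaces coincide. The only genuinely nontrivial step is the forward inclusion: one must be careful in matching the Duistermaat--Gr\"unbaum vanishing of odd Laurent coefficients of $\psi_{\lambda,l}$ at $z_i$ with the parity conditions imposed on the holomorphic function $p(z)e^{-z^2/2}/\tilde W_i(z)$, ensuring that the regular factor $e^{-z^2/2}/\tilde W_i(z)$ does not spoil the reflection structure that underlies the name ``quasi-invariance''. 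Once this identification is in place, the codimension count is essentially formal.
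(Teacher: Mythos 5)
Your argument is essentially the paper's own: the forward inclusion $\mathcal{U}_\lambda\subseteq\mathcal{Q}_\lambda$ from trivial monodromy and the Duistermaat--Gr\"unbaum local conditions, the bound $\mathrm{codim}_{\mathbb{C}[z]}\mathcal{Q}_\lambda\leq|\lambda|$ from counting the quasi-invariance functionals against $\deg W_\lambda=|\lambda|$, and $\mathrm{codim}_{\mathbb{C}[z]}\mathcal{U}_\lambda=|\lambda|$ from Proposition 5.3 of \cite{UGM-EHP}. The only difference is that you make explicit the step the paper leaves implicit, namely that condition (1) is automatic for polynomials and condition (2) contributes exactly $m_i$ linear functionals at each $z_i$.
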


Whenever $\lambda$ is not a double partition, the Wronskian $W_\lambda(z)$ will have one or more real zeros \cite{Adler}, so that the weight function \eqref{w} is no longer non-singular on the real line. To resolve this problem, we replace the standard contour $\mathbb{R}$ by a shifted contour $C=i\xi+\mathbb{R}$ and consider a corresponding Hermitian product \eqref{compl}. As will become clear below, to ensure that the product is Hermitian we need to restrict attention to the following subspace of quasi-invariant polynomials:
$$
\mathcal{Q}_{\lambda,\mathbb{R}}=\left\lbrace p\in\mathbb{C}[z] : \psi(z):=p(z)\frac{e^{-z^2/2}}{W_\lambda(z)}~\textrm{is quasi-invariant at}~z=z_i, \forall z_i\in Z_\lambda^\mathbb{R}\right\rbrace.
$$

By counting quasi-invariance conditions, we obtain the next proposition.

\begin{proposition}\label{Prop:Hcodim}
The codimension of $\mathcal{Q}_\lambda$ in $\mathcal{Q}_{\lambda,\mathbb{R}}$ is $|\lambda|-\sum_{z_i\in Z_\lambda^\mathbb{R}}m_i.$
\end{proposition}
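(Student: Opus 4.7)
The plan is to realise both $\mathcal{Q}_\lambda$ and $\mathcal{Q}_{\lambda,\mathbb{R}}$ as the common kernels of explicit finite families of linear functionals on $\mathbb{C}[z]$, with the second family being a subfamily of the first, and then to use Proposition~\ref{Prop:Hbasis} as a black box to conclude that the full family is linearly independent in $\mathbb{C}[z]^\ast$. The codimensions then simply subtract.

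First I would observe that at each zero $z_i \in Z_\lambda$ of multiplicity $m_i$, condition~(1) in the definition of quasi-invariance is automatic for any polynomial $p$, because $W_\lambda(z)$ has a zero of order exactly $m_i$ at $z_i$, so $\psi(z)(z-z_i)^{m_i} = p(z)e^{-z^2/2}(z-z_i)^{m_i}/W_\lambda(z)$ is already holomorphic at $z_i$. Condition~(2) then contributes exactly $m_i$ linear functionals
$$
\ell_{i,j}(p) := \left.\frac{d^{2j-1}}{dz^{2j-1}}\bigl(\psi(z)(z-z_i)^{m_i}\bigr)\right|_{z=z_i}, \qquad j=1,\ldots,m_i,
$$
on $\mathbb{C}[z]$. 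By construction $\mathcal{Q}_\lambda = \bigcap_{z_i\in Z_\lambda,\,j}\ker\ell_{i,j}$ and $\mathcal{Q}_{\lambda,\mathbb{R}} = \bigcap_{z_i\in Z_\lambda^{\mathbb{R}},\,j}\ker\ell_{i,j}$.

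Next, the total number of functionals cutting out $\mathcal{Q}_\lambda$ is $\sum_{z_i\in Z_\lambda} m_i = \deg W_\lambda = |\lambda|$. By Proposition~\ref{Prop:Hbasis} (combined with Proposition~5.3 of \cite{UGM-EHP}), $\mathrm{codim}_{\mathbb{C}[z]}\mathcal{Q}_\lambda = |\lambda|$. Since the number of functionals agrees with the codimension of their common kernel, the $\ell_{i,j}$ must be linearly independent in $\mathbb{C}[z]^\ast$. This is the only nontrivial step.

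Linear independence of the full family immediately implies linear independence of any subfamily, so the restricted collection indexed by $z_i \in Z_\lambda^{\mathbb{R}}$ is also independent, giving $\mathrm{codim}_{\mathbb{C}[z]}\mathcal{Q}_{\lambda,\mathbb{R}} = \sum_{z_i\in Z_\lambda^{\mathbb{R}}} m_i$. Using the inclusion $\mathcal{Q}_\lambda \subset \mathcal{Q}_{\lambda,\mathbb{R}} \subset \mathbb{C}[z]$ and the standard codimension subtraction, one concludes
$$
\mathrm{codim}_{\mathcal{Q}_{\lambda,\mathbb{R}}}\mathcal{Q}_\lambda = \mathrm{codim}_{\mathbb{C}[z]}\mathcal{Q}_\lambda - \mathrm{codim}_{\mathbb{C}[z]}\mathcal{Q}_{\lambda,\mathbb{R}} = |\lambda| - \sum_{z_i\in Z_\lambda^{\mathbb{R}}} m_i,
$$
as required. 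The only mild obstacle is the initial bookkeeping — verifying that condition~(1) is automatic and that condition~(2) contributes precisely $m_i$ functionals per zero — while the genuinely nontrivial input, namely the global (and hence local) independence of these functionals, is carried entirely by Proposition~\ref{Prop:Hbasis}.
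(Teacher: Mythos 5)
Your proof is correct and follows essentially the same route as the paper, which disposes of this proposition in one line (``by counting quasi-invariance conditions''). Your write-up is in fact more careful: you make explicit the one point the paper leaves implicit, namely that the $|\lambda|$ quasi-invariance functionals are linearly independent because their common kernel $\mathcal{Q}_\lambda=\mathcal{U}_\lambda$ already has codimension $|\lambda|$ by Proposition~\ref{Prop:Hbasis}, so that the count of conditions really does equal the codimension for the subfamily attached to the real zeros.
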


We are now ready for the main definition of this section.

\begin{definition}\label{Def:Hprod}
Let $\xi\in\mathbb{R}$ be such that
\begin{equation}
\label{xires}
0<|\xi|<|{\rm Im}~z_i|,\ \ \ \forall z_i\in Z_\lambda\setminus Z_\lambda^\mathbb{R}.
\end{equation}
Then, we define a sesquilinear product $\langle\cdot,\cdot\rangle$ on $\mathcal{Q}_{\lambda,\mathbb{R}}$ by setting
\begin{equation}
\label{Hprod}
\langle p,q\rangle=\int_{i\xi+\mathbb{R}}p(z)\bar{q}(z)\frac{e^{-z^2}}{W_\lambda^2(z)}dz,\ \ \ p,q\in\mathcal{Q}_{\lambda,\mathbb{R}}.
\end{equation}
\end{definition}

Now we will show that the product does not depend on the specific choice of $\xi$. We find it worth stressing that this important property relies on our restriction to the subspace $\mathcal{Q}_{\lambda,\mathbb{R}}$.

\begin{proposition}\label{Lemma:indep}
For any $p,q\in\mathcal{Q}_{\lambda,\mathbb{R}}$, the value of $\langle p,q\rangle$ is independent of $\xi\in\mathbb{R}$ provided the condition \eqref{xires} is satisfied.
\end{proposition}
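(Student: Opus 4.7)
The plan is a contour deformation argument. Take any two admissible shifts $\xi_1, \xi_2$ satisfying \eqref{xires}; I shall prove the two integrals agree by closing the contour through the horizontal strip $S = \{z : \min(\xi_1,\xi_2) \le \mathrm{Im}\,z \le \max(\xi_1,\xi_2)\}$. Consider the rectangle with horizontal sides $i\xi_j + [-R, R]$ and the two vertical sides joining them at $\pm R$. The Gaussian factor $|e^{-z^2}| = e^{-x^2 + y^2}$ decays super-polynomially along horizontal lines of bounded height, dominating the at-most-polynomial growth of $p\bar q/W_\lambda^2$ away from zeros of $W_\lambda$. Hence the integrals converge absolutely and the vertical sides contribute zero as $R \to \infty$, so by Cauchy's theorem
\[
\langle p,q\rangle_{\xi_2} - \langle p,q\rangle_{\xi_1} = 2\pi i \sum_{z^* \in S} \mathrm{Res}_{z = z^*} \frac{p(z)\bar q(z) e^{-z^2}}{W_\lambda^2(z)}.
\]

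By the restriction \eqref{xires}, the complex zeros of $W_\lambda$ all lie strictly outside $S$, so the only possible poles of the integrand inside $S$ are the real zeros $z_i \in Z_\lambda^\mathbb{R}$ (and they actually appear only when $\xi_1, \xi_2$ have opposite signs). The whole problem thus reduces to showing $\mathrm{Res}_{z = z_i} = 0$ at each such real zero. To exploit quasi-invariance I would factor the integrand as $\psi_p(z)\,\overline{\psi_q}(z)$ with $\psi_p(z) := p(z)e^{-z^2/2}/W_\lambda(z)$, which is legitimate because $W_\lambda$ and $e^{-z^2/2}$ coincide with their Schwarz conjugates. Since $p, q \in \mathcal{Q}_{\lambda,\mathbb{R}}$, the function $\psi_p$ is quasi-invariant at $z_i$, and for real $z_i$ Schwarz conjugation merely conjugates the Laurent coefficients, so $\overline{\psi_q}$ is quasi-invariant at $z_i$ as well.

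The main obstacle is then the residue computation, and it resolves by a parity argument. Setting $h = z - z_i$, quasi-invariance of multiplicity $m_i$ gives Laurent expansions
\[
\psi_p(z_i + h) = \sum_{k \ge 0} a_k h^{k - m_i}, \qquad \overline{\psi_q}(z_i + h) = \sum_{l \ge 0} b_l h^{l - m_i},
\]
with $a_{2j-1} = b_{2j-1} = 0$ for $j = 1, \ldots, m_i$. The residue is the coefficient of $h^{-1}$ in the product, which is $\sum_{k + l = 2m_i - 1} a_k b_l$. Since $k + l = 2m_i - 1$ is odd, in each pair exactly one index is odd and lies in $\{1, 3, \ldots, 2m_i - 1\}$, so the corresponding coefficient vanishes. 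All residues in $S$ are therefore zero, and the equality of the two integrals follows. The proof highlights why the subspace $\mathcal{Q}_{\lambda,\mathbb{R}}$ had to be introduced: the odd-derivative vanishing conditions are precisely what kills the residues at real singularities of the weight.
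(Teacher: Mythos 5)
Your proposal is correct and follows essentially the same route as the paper: deform the contour, observe that condition \eqref{xires} confines all enclosed poles to the real zeros of $W_\lambda$, and kill each residue by a parity argument based on the vanishing of the odd-order coefficients guaranteed by quasi-invariance (your Cauchy-product computation of the $h^{-1}$ coefficient is just the Laurent-series restatement of the paper's Leibniz-rule proof of Lemma \ref{Lemma:res}). Your explicit checks that the vertical sides of the rectangle vanish and that the Schwarz conjugate $\bar q$ remains quasi-invariant at real points are welcome details that the paper leaves implicit.
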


\begin{proof}
Let $I_\xi$ denote the integral in the right-hand side of \eqref{Hprod}. By Cauchy's theorem, it suffices to show that $I_{\xi}-I_{-\xi}=0$ for some $\xi$ satisfying \eqref{xires}. From the residue theorem, we deduce that the difference between the two integrals is proportional to
$$
\sum_{z_i\in Z_{\lambda,\mathbb{R}}}\mathop{Res}_{z=z_i}\left(p(z)\bar{q}(z)\frac{e^{-z^2}}{W_\lambda^2(z)}\right).
$$
We claim that each of these residues vanish. In fact, we have the following more general result.

\begin{lemma}\label{Lemma:res}
If $\psi$, $\phi$ are quasi-invariant at $z=z_i$ with multiplicity $m_i$, then
$$
\mathop{Res}_{z=z_i}\big(\psi(z)\phi(z)\big)=0.
$$
\end{lemma}

Indeed, it follows from Condition (2) above that
\begin{multline*}
\big(\psi(z)\phi(z)(z-z_i)^{2m_i}\big)^{(2m_i-1)}\big\arrowvert_{z=z_i}\\
=\sum_{j=0}^{2m_i-1}\binom{2m_i-1}{j}\big(\psi(z)(z-z_i)^{m_i}\big)^{(2m_i-1-j)}\big\arrowvert_{z=z_i}\big(\phi(z)(z-z_i)^{m_i}\big)^{(j)}\big\arrowvert_{z=z_i}\\
=0.
\end{multline*}
\end{proof}

It is now straightforward to show that Definition \ref{Def:Hprod} yields a Hermitian product.

\begin{proposition}\label{Prop:Hprod}
The sesquilinear product $\langle\cdot,\cdot\rangle$ is Hermitian:
$$
\langle p,q\rangle=\overline{\langle q,p\rangle},\ \ \ \forall p,q\in\mathcal{Q}_{\lambda,\mathbb{R}}.
$$
\end{proposition}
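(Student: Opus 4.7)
The plan is to compute $\overline{\langle q,p\rangle}$ directly from the definition and transform it, via conjugation of coefficients, coordinates, and contour, into the integral $I_{-\xi}$ representing $\langle p,q\rangle$ along the reflected contour; then invoke Proposition \ref{Lemma:indep} to identify $I_{-\xi}$ with $I_\xi=\langle p,q\rangle$.

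First I would parametrise the contour as $z=x+i\xi$, $x\in\mathbb{R}$, so that
$$
\langle q,p\rangle=\int_{-\infty}^{\infty} q(x+i\xi)\,\bar{p}(x+i\xi)\,\frac{e^{-(x+i\xi)^2}}{W_\lambda^2(x+i\xi)}\,dx.
$$
Taking the complex conjugate of this integral, the key observations are: (i) from the definition $\bar{q}(w)=\overline{q(\bar{w})}$ one has $\overline{q(x+i\xi)}=\bar{q}(x-i\xi)$, and similarly $\overline{\bar{p}(x+i\xi)}=p(x-i\xi)$; (ii) $\overline{e^{-(x+i\xi)^2}}=e^{-(x-i\xi)^2}$; (iii) the Wronskian $W_\lambda(z)$ has real (in fact integer) coefficients because each $H_{k_i}$ does, so $\overline{W_\lambda(x+i\xi)}=W_\lambda(x-i\xi)$. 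Combining these gives
$$
\overline{\langle q,p\rangle}=\int_{-\infty}^{\infty} p(x-i\xi)\,\bar{q}(x-i\xi)\,\frac{e^{-(x-i\xi)^2}}{W_\lambda^2(x-i\xi)}\,dx,
$$
which is exactly the integral $I_{-\xi}$ associated to the pairing $\langle p,q\rangle$ but with contour $-i\xi+\mathbb{R}$ in place of $i\xi+\mathbb{R}$.

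Finally, since $-\xi$ also satisfies the separation condition \eqref{xires}, Proposition \ref{Lemma:indep} yields $I_{-\xi}=I_\xi=\langle p,q\rangle$, completing the proof. No real obstacle is expected: the only subtlety is making sure the conjugation of the contour, the measure, and the Schwarz-conjugate polynomial are tracked correctly, and that the real-coefficient property of $W_\lambda$ is invoked. The independence of $\xi$ (established via Lemma \ref{Lemma:res}) is what makes the whole argument work on the restricted space $\mathcal{Q}_{\lambda,\mathbb{R}}$, and without that restriction the residues on the real line would obstruct the identification $I_{-\xi}=I_{\xi}$.
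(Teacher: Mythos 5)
Your proposal is correct and follows essentially the same route as the paper: conjugate the integral, use the Schwarz conjugation together with the real coefficients of $e^{-z^2}$ and $W_\lambda$ to identify $\overline{\langle q,p\rangle_{\xi}}$ with $\langle p,q\rangle_{-\xi}$, and then invoke Proposition \ref{Lemma:indep} to replace $-\xi$ by $\xi$. The paper states the identity in the equivalent form $\langle p,q\rangle_\xi=\overline{\langle q,p\rangle_{-\xi}}$, but the argument is the same.
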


\begin{proof}
In what follows, we find it convenient to use the notation
$$
w(z)=\frac{e^{-z^2/2}}{W_\lambda^2(z)},
$$
and use a subscript to indicate the choice of $\xi$ in \eqref{Hprod}. Since the classical Hermite polynomials have real coefficients, it is evident from \eqref{W} that
$\bar{w}(z)=w(z)$. Hence, we have the following equalities:
\begin{align*}
\langle p,q\rangle_\xi &=\int_\mathbb{R}p(i\xi+x)\bar{q}(i\xi+x)w(i\xi+x)dx\\
&=\overline{\int_\mathbb{R}\bar{p}(-i\xi+x)q(-i\xi+x)w(-i\xi+x)dx}\\
&=\overline{\langle q,p\rangle_{-\xi}}.
\end{align*}
Combined with Proposition \ref{Lemma:indep}, this yields the asserted hermiticity property.
\end{proof}

We recall that the classical Hermite polynomials $H_l(x)$ satisfy the orthogonality relation
\begin{equation}
\label{clHorth}
\int_\mathbb{R}H_j(x)H_l(x)e^{-x^2}dx=\delta_{jl}2^l l!\sqrt{\pi},\ \ \ j,l\in\mathbb{Z}_{\geq 0}.
\end{equation}
Combining this fact with the factorisation \eqref{Dfact} of the intertwining operator $\mathscr{D}_\lambda$, it is now readily established by induction on the length $n$ of $\lambda$ that the CEHPs $H_{\lambda,l}(x)$ are orthogonal with respect to the Hermitian form $\langle\cdot,\cdot\rangle$ (cf. \cite{UGM-EHP}).

\begin{theorem}\label{Thm:Horth}
The CEHPs $H_{\lambda,l}$ satisfy the orthogonality relation
\begin{equation}
\label{Hsqnorm}
\langle H_{\lambda,j},H_{\lambda,l}\rangle=\delta_{jl}\sqrt{\pi}2^l l!\prod_{m=1}^n 2(l-k_m),\ \ \ j,l\in\mathbb{Z}_{\geq 0}\setminus\lbrace k_1,\ldots,k_n\rbrace.
\end{equation}
\end{theorem}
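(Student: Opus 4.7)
The plan is to reduce the Hermitian product to the classical Hermite orthogonality \eqref{clHorth} by stripping off the Darboux transformations in the factorisation \eqref{Dfact} one at a time.

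First, I would convert the polynomial integral into an integral of eigenfunctions. Since $W_\lambda$ and the classical Hermite polynomials have real coefficients, their Schwarz conjugates coincide with themselves; together with $\overline{e^{-z^2/2}}=e^{-z^2/2}$ and the relation \eqref{psiH}, the weight $e^{-z^2}/W_\lambda^2$ absorbs $H_{\lambda,j}\overline{H_{\lambda,l}}$ to yield
\[
\langle H_{\lambda,j},H_{\lambda,l}\rangle=\int_{i\xi+\mathbb R}(\mathscr D_\lambda\psi_j)(z)\,\overline{(\mathscr D_\lambda\psi_l)(z)}\,dz.
\]

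The heart of the argument is the standard Darboux identity for each factor $D_m$. Its coefficient $f_m=(\log\mathscr W_m/\mathscr W_{m+1})'$ has real Taylor coefficients, so $\bar f_m=f_m$; integration by parts along $i\xi+\mathbb R$, with boundary terms at $\pm\infty$ killed by Gaussian decay, produces the formal adjoint $D_m^\ast=-d/dz-f_m$, and a routine computation gives $D_m^\ast D_m=\mathscr L^{(n-m)}-E_{k_m}$, where $\mathscr L^{(n-m)}$ denotes the intermediate Schr\"odinger operator reached after $n-m$ Darboux transformations and $E_{k_m}=2k_m+1$. Pushing this identity through the complex conjugate in the inner product peels off a scalar factor:
\[
\int(D_m\cdots D_n\psi_j)\overline{(D_m\cdots D_n\psi_l)}\,dz=2(l-k_m)\int(D_{m+1}\cdots D_n\psi_j)\overline{(D_{m+1}\cdots D_n\psi_l)}\,dz,
\]
since $D_{m+1}\cdots D_n\psi_l$ is an eigenfunction of $\mathscr L^{(n-m)}$ at eigenvalue $2l+1$. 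Iterating from $m=1$ up to $m=n$ accumulates $\prod_{m=1}^n 2(l-k_m)$; the residual integral is $\int\psi_j\overline{\psi_l}\,dz=\int H_j(z)H_l(z)e^{-z^2}\,dz$, and since this integrand is entire, Cauchy's theorem lets me shift the contour back to $\mathbb R$ and apply \eqref{clHorth} to produce $\delta_{jl}2^l l!\sqrt\pi$.

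I expect the main technical obstacle to be justifying each integration by parts. The intermediate eigenfunctions $D_{m+1}\cdots D_n\psi_l$ have poles at the zeros of $\mathscr W_{m+1}$, which in general differ from the zeros of $W_\lambda$; these apparent singularities cancel in the full integrand $\psi_{\lambda,j}\overline{\psi_{\lambda,l}}$ but are genuinely present in the factored form. The fix is to shrink $\xi$ further so that the contour $i\xi+\mathbb R$ avoids the zeros of every $\mathscr W_m$, $m=1,\ldots,n$, and then to invoke Proposition \ref{Lemma:indep} to confirm that the value of $\langle H_{\lambda,j},H_{\lambda,l}\rangle$ does not depend on this choice.
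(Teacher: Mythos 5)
Your proposal is correct and follows essentially the same route as the paper: the published proof is an induction on the length $n$ of $\lambda$, peeling off one Darboux factor $D_1$ per step via the adjoint factorisation $D_1^*D_1=\mathscr{L}_{\hat{\lambda}}-2k_1-1$ to extract the factor $2(l-k_1)$, with the base case $n=0$ given by the classical relation \eqref{clHorth} --- exactly your iteration read as a recursion. Your extra care in choosing $\xi$ so that the contour also avoids the zeros of the intermediate Wronskians $\mathscr{W}_m$, and then appealing to Proposition \ref{Lemma:indep}, addresses a point the paper leaves implicit.
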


\begin{proof}
The assertion clearly holds true for $n=0$, with the empty product taken to be equal to one. Introducing the partition
$$
\hat{\lambda}=(\lambda_2,\ldots,\lambda_n),
$$
we have
$$
\langle H_{\lambda,j},H_{\lambda,l}\rangle=\int_{i\xi+\mathbb{R}}(D_1\psi_{\hat{\lambda},j})(z)(\overline{D_1\psi_{\hat{\lambda},l}})(z)dz.
$$
Since $\overline{\mathscr{W}_m}=\mathscr{W}_m$, the (formal) adjoint of $D_1$ is given by
$$
D_1^*=-\frac{d}{dx}-\frac{d}{dx}\left(\log\frac{\mathscr{W}_1}{\mathscr{W}_{2}}\right).
$$
The factorisation
$$
D_1^*D_1=\mathscr{L}_{\hat{\lambda}}-2k_1-1
$$
thus entails that
$$
\langle H_{\lambda,j},H_{\lambda,l}\rangle=2(l-k_1)\langle H_{\hat{\lambda},j},H_{\hat{\lambda},l}\rangle.
$$
This completes the induction step, and the theorem is proved.
\end{proof}

\begin{remark}
Since $\langle\cdot,\cdot\rangle$ is Hermitian, each squared norm $\langle p,p\rangle$, $p\in\mathcal{Q}_{\lambda,\mathbb{R}}$, is real, but need not to be positive. In fact, if partition is not double, there is always a finite number of polynomials with negative squared norm, which can be easily identified using formula (\ref{Hsqnorm}). For example, setting $\lambda=(1)$ in \eqref{Hsqnorm}, we see that $\langle H_{(1),l},H_{(1),l}\rangle<0$ if and only if $l=0$. Grinevich and Novikov \cite{GN} pointed out a similar fact in a finite-gap case.
\end{remark}

We conclude this section by showing that the linear span $\mathcal U_\lambda$  is dense in $\mathcal{Q}_{\lambda,\mathbb{R}}$ in the sense that
$$
\langle p,H_{\lambda,l}\rangle=0,\ \forall l\in\mathbb{Z}_{\geq 0}\setminus\lbrace k_1,\ldots,k_n\rbrace\implies p\equiv 0.
$$
By Proposition \ref{Prop:Hbasis}, we can formulate the result as follows.

\begin{theorem}\label{Thm:Hdensity}
The subspace $\mathcal{Q}_\lambda$ is dense in $\mathcal{Q}_{\lambda,\mathbb{R}}$.
\end{theorem}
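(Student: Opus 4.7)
The plan is to argue by induction on $n$, the length of $\lambda$, in parallel to the proof of Theorem \ref{Thm:Horth}. The base case $n=0$ is immediate: $\lambda$ is empty, $\mathcal{Q}_\lambda=\mathcal{Q}_{\lambda,\mathbb{R}}=\mathbb{C}[z]$, $H_{\lambda,l}=H_l$, and density reduces to the classical completeness of the Hermite polynomials in $\mathbb{C}[z]$.

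For the inductive step I assume density for $\hat\lambda=(\lambda_2,\ldots,\lambda_n)$ and consider any $p\in\mathcal{Q}_{\lambda,\mathbb{R}}$ satisfying $\langle p,H_{\lambda,l}\rangle_\lambda=0$ for all $l\in\mathbb{Z}_{\geq 0}\setminus\{k_1,\ldots,k_n\}$. After choosing $\xi$ satisfying \eqref{xires} for both $\lambda$ and $\hat\lambda$, I set $\tilde p_\lambda=pe^{-z^2/2}/W_\lambda$ and use $\psi_{\lambda,l}=D_1\psi_{\hat\lambda,l}$ together with integration by parts along $C=i\xi+\mathbb{R}$ (boundary terms vanish by Gaussian decay, no residues arise since $\tilde p_\lambda$ is holomorphic on $C$) to rewrite the orthogonality assumption as
\begin{equation*}
\int_C(D_1^*\tilde p_\lambda)(z)\,\psi_{\hat\lambda,l}(z)\,dz=0,\qquad l\notin\{k_1,\ldots,k_n\}.
\end{equation*}

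The main technical step is to show that $D_1^*\tilde p_\lambda=\tilde{\hat p}_{\hat\lambda}$ for some $\hat p\in\mathcal{Q}_{\hat\lambda,\mathbb{R}}$. A direct computation using $\mathscr{W}_1=e^{-nz^2/2}W_\lambda$ and $\mathscr{W}_2=e^{-(n-1)z^2/2}W_{\hat\lambda}$ yields the candidate
\begin{equation*}
\hat p=\frac{-p'W_{\hat\lambda}+2zpW_{\hat\lambda}+pW_{\hat\lambda}'}{W_\lambda},
\end{equation*}
and I have to verify (i) that $\hat p$ is a polynomial (i.e.\ $W_\lambda$ divides the numerator) and (ii) that $\tilde{\hat p}_{\hat\lambda}$ is quasi-invariant at every real zero of $W_{\hat\lambda}$. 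Both reduce to a local analysis at each real zero, exploiting the quasi-invariance of $\tilde p_\lambda$ at real zeros of $W_\lambda$ together with the local Duistermaat--Gr\"unbaum structure of the monodromy-free chain. I expect this step to be the main obstacle.

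Granting that, the inductive hypothesis applied to $\hat\lambda$ requires the vanishing of $\langle\hat p,H_{\hat\lambda,l}\rangle_{\hat\lambda}$ for all $l\notin\{k_2,\ldots,k_n\}$, whereas the step above only produces it for $l\notin\{k_1,\ldots,k_n\}$; the missing index $l=k_1$ is handled automatically, since the identity $\mathscr{D}_\lambda\psi_{k_1}=0$ factorises as $D_1\psi_{\hat\lambda,k_1}=0$, and integrating by parts in the reverse direction yields
\begin{equation*}
\langle\hat p,H_{\hat\lambda,k_1}\rangle_{\hat\lambda}=\int_C(D_1^*\tilde p_\lambda)\psi_{\hat\lambda,k_1}\,dz=\int_C\tilde p_\lambda(D_1\psi_{\hat\lambda,k_1})\,dz=0.
\end{equation*}
The inductive hypothesis then forces $\hat p=0$, hence $D_1^*\tilde p_\lambda=0$. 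Solving this first-order linear ODE produces $\tilde p_\lambda=Ce^{z^2/2}W_{\hat\lambda}/W_\lambda$, i.e.\ $p=Ce^{z^2}W_{\hat\lambda}$; since $p$ is a polynomial and $e^{z^2}$ is transcendental, $C=0$ and $p\equiv 0$, closing the induction.
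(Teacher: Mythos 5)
Your reduction hinges on the claim that $D_1^*\bigl(p\,e^{-z^2/2}/W_\lambda\bigr)=\hat p\,e^{-z^2/2}/W_{\hat\lambda}$ for some polynomial $\hat p\in\mathcal{Q}_{\hat\lambda,\mathbb{R}}$, and this is exactly where the argument breaks down --- not merely because you leave it unproved, but because it appears to be false for general $p\in\mathcal{Q}_{\lambda,\mathbb{R}}$. Your own formula requires $W_\lambda$ to divide $-p'W_{\hat\lambda}+2zpW_{\hat\lambda}+pW_{\hat\lambda}'$. Divisibility by $W_\lambda$ is a condition at \emph{all} zeros of $W_\lambda$, whereas membership in $\mathcal{Q}_{\lambda,\mathbb{R}}$ constrains $p$ only at the \emph{real} zeros $Z_\lambda^{\mathbb{R}}$; at a non-real zero $z_0$ of $W_\lambda$ that is not a zero of $W_{\hat\lambda}$ the numerator is generically nonzero, so $\hat p$ is a rational function with poles rather than a polynomial. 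The inductive hypothesis is a statement about the polynomial space $\mathcal{Q}_{\hat\lambda,\mathbb{R}}$, so it cannot be applied to such a $\hat p$, and the induction does not close. (Even for $p\in\mathcal{Q}_\lambda$, both the divisibility and the quasi-invariance of $\hat p$ at the real zeros of $W_{\hat\lambda}$ would each need a genuine local argument; you only gesture at the Duistermaat--Gr\"unbaum expansions.) The final step of your induction, solving $D_1^*\tilde p_\lambda=0$, is fine, as is the treatment of the index $l=k_1$, but these do not repair the central gap.

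The paper avoids all of this with a two-line argument: to show that $\langle p,q\rangle=0$ for all $q\in\mathcal{Q}_\lambda$ forces $p\equiv 0$, it suffices to test $p$ against the particular quasi-invariants $q_{\lambda,l}=W_\lambda^2H_l\in\mathcal{Q}_\lambda$. The factor $W_\lambda^2$ cancels the denominator of the weight \eqref{w}, giving $\int_{i\xi+\mathbb{R}}p(z)H_l(z)e^{-z^2}\,dz=0$ for all $l\in\mathbb{Z}_{\geq 0}$; the integrand is entire, so one may take $\xi\to 0$ and conclude $p\equiv 0$ from the classical orthogonality \eqref{clHorth}. If you want to salvage an inductive proof along your lines you would have to either strengthen the induction to a suitable class of meromorphic test functions or first establish the divisibility statement as a separate lemma about the Darboux chain; the direct test-function argument is both shorter and complete.
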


\begin{proof}
Suppose that $p\in\mathcal{Q}_{\lambda,\mathbb{R}}$ is such that
$$
\langle p,q\rangle=0,\ \ \ \forall q\in\mathcal{Q}_\lambda.
$$
Introducing the polynomials
$$
q_{\lambda,l}(z)=W^2_\lambda(z)H_l(z),\ \ \ l\in\mathbb{Z}_{\geq 0},
$$
which clearly belong to the subspace $\mathcal{Q}_\lambda$, we obtain
$$
0=\langle p,q_{\lambda,l}\rangle =\int_{i\xi+\mathbb{R}}p(z)\bar{H_l}(z)e^{-z^2}dz,\ \ \ \forall l\in\mathbb{Z}_{\geq 0}.
$$
Since the integrand is entire, we can take the limit $\xi\to 0$. Then expanding $p$ in terms of the classical Hermite polynomials $H_l$, it follows immediately from \eqref{clHorth} that $p\equiv 0$.
\end{proof}

\begin{remark}
If we assume that $\lambda$ is a double partition, then we recover orthogonality and completeness results from \cite{UGM-EHP} (see Propositions 5.7--5.8). Indeed, to recover the former it is enough to note that the weight function \eqref{w} is guaranteed to be non-singular on the real line, so that we can take the limit $\xi\to 0$ in \eqref{Hprod}; and the latter follows from the observation that we have $\mathcal{Q}_{\lambda,\mathbb{R}}=\mathbb{C}[z]$.
\end{remark}

\section{Exceptional Laurent orthogonal polynomials}
In this section we generalise our approach to the space of Laurent polynomials $\Lambda=\mathbb{C}[z,z^{-1}]$ using the trigonometric monodromy-free Schr\"odinger operators \cite{CFV-Huygen}, which play an important role in the theory of Huygens' principle \cite{BL}.

More specifically, we consider the Laurent polynomials $P_{\kappa,a;l}$, as defined in \eqref{excL}. Due to the results of Theorem \ref{Thm:Lorth} and Proposition \ref{Prop:Lext} we call $P_{\kappa,a;l}$, $l\in\mathbb Z$, exceptional Laurent orthogonal polynomials (ELOPs).

\subsection{The general case}
In this first subsection we allow any choice of complex parameters $a=(a_1,\ldots,a_n)$, $a_k\in\mathbb{C}\setminus\{0\}$.

We start from the elementary fact that the exponential functions
$$
e_l(x)=\exp(ilx),\ \ \ l\in\mathbb{Z},
$$
have the eigenfunction property
$$
\mathscr{L}e_l\equiv -\frac{d^2e_l}{dx^2}=l^2e_l,\ \ \ x\in\mathbb{C}/2\pi\mathbb{Z}.
$$
Note that instead of usual unit circle $\mathbb{R}/2\pi\mathbb{Z}$ we consider its complex version - cylinder $\mathbb{C}/2\pi\mathbb{Z}.$ It is natural from the trivial mondromy point of view, see \cite{CFV-Huygen}.

Sequences of Darboux transformations at the levels $0<k_n<k_{n-1}<\cdots<k_1$ are now parameterised by $n$ complex parameters $\theta=(\theta_1,\ldots,\theta_n)$, $\theta_k\in\mathbb{C}$. Specifically, introducing the functions
\begin{equation}
\label{Phik}
\phi_{k_j}(\theta_{j}, x)=2\cos(k_jx+\theta_{j}),\ \ \ j=1,\ldots,n,
\end{equation}
the resulting Schr\"odinger operator takes the form
\begin{equation}
\label{LlamL}
\mathscr{L}_\kappa=-\frac{d^2}{dx^2}-2\frac{d^2}{dx^2}\big(\log \textrm{Wr}(\phi_{k_1},\ldots,\phi_{k_n})\big),
\end{equation}
where $\kappa=\{k_1, \dots, k_n\}.$
Furthermore, letting $\mathscr{D}_\kappa$ act by
$$
\mathscr{D}_\kappa\phi=\frac{{\rm Wr}(\phi,\phi_{k_1},\ldots,\phi_{k_n})}{{\rm Wr}(\phi_{k_1},\ldots,\phi_{k_n})},
$$
the intertwining relation \eqref{inttw} holds true, and the functions
\begin{equation}
\label{Philaml}
\phi_{\kappa,\theta; l}=\frac{\textrm{Wr}(e_l,\phi_{k_1},\ldots,\phi_{k_n})}{\textrm{Wr}(\phi_{k_1},\ldots,\phi_{k_n})},\ \ \ l\in\mathbb{Z},
\end{equation}
have the eigenfunction property
$$
\mathscr{L}_\kappa\phi_{\kappa,l}=l^2\phi_{\kappa,l}.
$$
We note that at each level $k_j$, $j=1,\ldots, n$, the multiplicity is reduced from two to one. Indeed, by \eqref{Phik}--\eqref{Philaml} and linearity of the Wronskian, we have the relation
$$
\exp(i\theta_{j})\phi_{\kappa,k_j}(\theta_{j};x)+\exp(-i\theta_{j})\phi_{\kappa,-k_j}(\theta_{j};x)\equiv 0,\ \ \ j=1,\ldots,n.
$$

To establish the precise connection between the functions $\phi_{\kappa,l}$ and the ELOPs $P_{\kappa,a; l}$ given by (\ref{excL}), we change variable to
$$
z=\exp(ix)
$$
and fix the values of the parameters $a=(a_1,\ldots,a_n)$ according to
$$
a_k=\exp(i\theta_k)\in\mathbb{C}\setminus\{0\},\ \ \ k=1,\ldots,n.
$$
Then, it is readily seen that
$$
\phi_{\kappa,\theta;l}(\theta,x)=P_{\kappa,a; l}(z)\mathcal W_{\kappa,a} (z)^{-1},
$$
with $P_{\kappa,a; l}(z)$ given by (\ref{excL}) and
\begin{equation}
\label{hW}
\mathcal W_{\kappa,a}(z)=
\begin{vmatrix}
  	\Phi_{k_1}(a_1;z) 			&	\Phi_{k_2}(a_2;z)			&	\cdots 	&	\Phi_{k_n}(a_n;z)\\
  	D\Phi_{k_1}(a_1;z) 			&	D\Phi_{k_2}(a_2;z)			&	\cdots 	&	D\Phi_{k_n}(a_n;z)\\
 	\vdots  		& 	\vdots 		& 	\ddots 	& 	\vdots\\
  	D^{n-1}\Phi_{k_1}(a_1;z) 			&	D^{n-1}\Phi_{k_2}(a_2;z)			&	\cdots 	&	D^{n-1}\Phi_{k_n}(a_n;z)\\
\end{vmatrix},
\end{equation}
where $D=zd/dz$ and
$$
\Phi_k(a;z)=a z^k+a^{-1} z^{-k}.
$$
Furthermore, a direct computation reveals that $P_{\kappa,a; l}$ is an eigenfunction of the operator
$$
T_\kappa=-D^2+2\frac{D\mathcal W_{\kappa,a}}{\mathcal W_{\kappa,a}}D-\frac{D^2\mathcal W_{\kappa,a}}{\mathcal W_{\kappa,a}}
$$
with eigenvalue $l^2$.

\begin{example}
In the particular case $\kappa=\{1\}$ the corresponding Schr\"odinger operator is given by
\begin{align*}
\mathscr{L}_{\{1\}} &=-\frac{d^2}{dx^2}-2\frac{d^2}{dx^2}\big(\log(2\cos(x+\theta_1))\big)\\
&=-\frac{d^2}{dx^2}+\frac{2}{\cos^2(x+\theta_1)}.
\end{align*}
When expressed in terms of the variable $z$ and the parameter $a_1$, the first few exceptional Laurent polynomials $P_{\{1\},a; l}$ defined by (\ref{excL}) and the corresponding eigenfunctions $\Phi_{\{1\},a; l} = P_{\{1\},a; l}/\Phi_1, \, l \in \mathbb Z$ are given by
\renewcommand\arraystretch{2}
$$
\begin{array}{l r}
P_{\{1\},a; 0}=a_1z-a_1^{-1}z^{-1},\quad\quad   &\Phi_{\{1\},a; 0}=\frac{a_1z-a_1^{-1}z^{-1}}{a_1z+a_1^{-1}z^{-1}},\\
P_{\{1\},a; -1}=2a_1,\quad\quad&\Phi_{\{1\},a;-1}=\frac{2a_1}{a_1z+a_1^{-1}z^{-1}},\\
P_{\{1\},a;1}=-2a_1^{-1},\quad\quad&\Phi_{\{1\},a;1}=-\frac{2a_1^{-1}}{a_1z+a_1^{-1}z^{-1}},\\
P_{\{1\},a;-2}=a_1^{-1}z^{-3}+3a_1z^{-1},\quad\quad&\Phi_{\{1\},a;-2}=\frac{a_1^{-1}z^{-3}+3a_1z^{-1}}{a_1z+a_1^{-1}z^{-1}},\\
P_{\{1\},a;2}=a_1z^3+3a_1^{-1}z,\quad\quad&\Phi_{\{1\},a;2}=-\frac{a_1z^3+3a_1^{-1}z}{a_1z+a_1^{-1}z^{-1}}.
\end{array}
$$
From these explicit formulae, it is manifest that both $P_{\{1\},a;\pm 1}$ and $\Phi_{\{1\},a ;\pm 1}$ are linearly dependent and that each eigenfunction is singular at $z=\pm i/a_1$. For general $l\in\mathbb{Z}$, the latter fact can be easily seen from the definition of $P_{\{1\},a;l}$.
%
\end{example}

We note that, upon setting
$$
\mathscr{W}_m={\rm Wr}(\phi_{k_m},\ldots,\phi_{k_n}),
$$
the intertwining operator $\mathscr{D}_\kappa$ factorises according to \eqref{Dfact}--\eqref{Dm}. Just as in the Hermite case, it follows that each Schr\"odinger operator $\mathscr{L}_\kappa$ has trivial monodromy. Moreover, every monodromy-free trigonometric Schr\"odinger operator is of the form \eqref{LlamL}, see \cite{CFV-Huygen}.

Let $Z_\kappa$ be the set of zeros $z_i\in\mathbb{C}$ of the function $\mathcal W_{\kappa,a}(z)$ with multiplicities $m_i\in\mathbb{N}$ and $X_\kappa$ be the corresponding set consisting of $x_j$ such that $\exp(ix_j)=z_j, \,\, z_j\in Z_\kappa$ (we drop the dependence on $a$ in the notations for brevity in the rest of this section).

Introduce the subspace
$$
\mathscr{Q}_\kappa=\big\lbrace P\in\Lambda : \Phi(x):=(P/\mathcal W_\kappa)(\exp(ix))~\textrm{is quasi-invariant at all} \,\, x_j\in X_\kappa\big\rbrace.
$$
It follows from trivial monodromy property that
$$
\mathcal{U}_\kappa:=\langle P_{\kappa,l} : l\in\mathbb{Z}\rangle\subset\mathscr{Q}_\kappa.
$$
However, in contrast to Hermite case (see Proposition \ref{Prop:Hbasis}), the converse inclusion does not hold. Instead, we have the following result.

\begin{proposition}\label{Prop:Lcodim}
The codimension of $\mathcal{U}_\kappa$ in $\mathscr{Q}_\kappa$ is $n$.
\end{proposition}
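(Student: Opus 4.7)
The plan is to compare both spaces by intersecting with the finite-dimensional filtration $\Lambda_N := \operatorname{span}\{z^l : |l| \leq N\}$ and computing dimensions exactly for $N$ large. Set $K := k_1 + \cdots + k_n$. I aim to show that
$$
\dim(\mathscr{Q}_\kappa \cap \Lambda_N) - \dim(\mathcal{U}_\kappa \cap \Lambda_N) = n
$$
for every $N \geq K + k_1$, from which the proposition follows since these filtrations exhaust $\mathscr{Q}_\kappa$ and $\mathcal{U}_\kappa$ respectively.

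First I would analyse $\mathcal{U}_\kappa \cap \Lambda_N$. A cofactor expansion of \eqref{excL} along the last column, together with a Vandermonde evaluation, yields
$$
[z^{l+K}] P_{\kappa,l} = \Bigl(\prod_m a_m\Bigr)\Bigl(\prod_{m<m'}(k_{m'}-k_m)\Bigr)\prod_m(l-k_m),
$$
which is nonzero for $l \notin \kappa$. Combined with the symmetric statement at the bottom degree $l-K$, this shows that $P_{\kappa,l}$ for $|l| > k_1$ has exact extreme support $l \pm K$, so no nontrivial linear combination $\sum c_l P_{\kappa,l}$ with some $c_l \neq 0$ for $|l| > N - K$ can lie in $\Lambda_N$. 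Hence $\mathcal{U}_\kappa \cap \Lambda_N$ equals the image under the linear map $\tilde{\mathscr{D}}:\Lambda\to\Lambda$, $z^l \mapsto P_{\kappa,l}$, of the subspace $\Lambda_{N-K}$. Since $\mathscr{D}_\kappa$ is an $n$-th order differential operator whose kernel is spanned by the $\phi_{k_j}$, the kernel of $\tilde{\mathscr{D}}$ on $\Lambda$ equals the $n$-dimensional span of $\Phi_{k_j}(a_j;z) = a_j z^{k_j} + a_j^{-1} z^{-k_j}$, $j = 1,\ldots,n$; for $N \geq K + k_1$ this kernel already lies inside $\Lambda_{N-K}$, giving $\dim(\mathcal{U}_\kappa \cap \Lambda_N) = 2(N-K) + 1 - n$.

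Next I would compute $\dim(\mathscr{Q}_\kappa \cap \Lambda_N)$. The same Vandermonde analysis applied to the extreme columns of \eqref{hW} shows that $z^K \mathcal{W}_{\kappa,a}$ is a polynomial of degree exactly $2K$ with nonzero constant term, so $\sum_i m_i = 2K$. Each zero $z_i$ contributes $m_i$ quasi-invariance conditions, namely the vanishing of the Taylor coefficients of orders $1, 3, \ldots, 2m_i - 1$ of the germ $\Phi(x)(x-x_i)^{m_i}$; for $N$ large these $2K$ functionals are linearly independent on $\Lambda_N$, and hence $\dim(\mathscr{Q}_\kappa \cap \Lambda_N) = (2N+1) - 2K$. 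Subtracting the two dimensions yields $n$, establishing the claim.

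The main obstacle is the linear independence of the $2K$ quasi-invariance functionals on $\Lambda_N$. The natural route is to decouple the conditions at distinct zeros $z_i$ via a partial-fraction decomposition of $P/\mathcal{W}_{\kappa,a}$, observe that within a single pole $z_i$ the $m_i$ conditions select Taylor coefficients at distinct odd orders (yielding an upper-triangular system), and verify that $\Lambda_N$ surjects onto the relevant finite jet at every $z_i$ once $N$ is sufficiently large. A secondary care point is confirming that the leading-coefficient Vandermonde factor is nonzero, which reduces to the pairwise distinctness of the $k_m$; once both are in hand the rest is bookkeeping on the filtration dimensions.
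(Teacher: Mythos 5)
Your proposal is correct and follows essentially the same route as the paper: both arguments reduce the claim to comparing the codimension of $\mathcal{U}_\kappa$ in $\Lambda$, obtained as $2|\kappa|+n$ from the Vandermonde evaluation of the extreme coefficients of $P_{\kappa,l}$, with the codimension $2|\kappa|$ of $\mathscr{Q}_\kappa$ cut out by the quasi-invariance conditions. The only organisational differences are that you compute the first codimension by rank--nullity for the map $z^l\mapsto P_{\kappa,l}$ restricted to the filtration subspaces $\Lambda_{N-K}$ (the $n$ arising from the kernel spanned by the $\Phi_{k_j}(a_j;z)$), where the paper instead enumerates a maximal linearly independent subset of the $P_{\kappa,l}$ with $|l|<k_1+|\kappa|+1$, and that you explicitly flag the linear independence of the $2|\kappa|$ quasi-invariance functionals, which the paper uses but asserts without comment.
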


\begin{proof}
From \eqref{excL}, we deduce that
$$
P_{\kappa,l}(z)=z^{l+|\kappa|}\det V(l,k_1,\ldots,k_n)\prod_{j=1}^nk_j+\mathrm{l.d.},
$$
where $$|\kappa|=\sum_{i=1}^n k_i,$$
 l.d.~stands for terms of lower degree and $V$ is the Vandermonde matrix
$$
V(\alpha_1,\ldots,\alpha_m)=
\left[\begin{matrix}
  	1 			&	1			&	\cdots 	&	1\\
  	\alpha_1 		&	\alpha_2		&	\cdots 	&	\alpha_m\\
 	\vdots  		& 	\vdots 		& 	\ddots 	& 	\vdots\\
  	\alpha_1^{m-1}	&	\alpha_2^{m-1}	&	\cdots 	&	\alpha_m^{m-1}\\
\end{matrix}\right].
$$ 
Since $\det V(l,k_1,\ldots,k_n)=0$ if and only if $l=k_1,\ldots,k_n$, it follows that the degree sequence
$$
I_\kappa^+=\{\deg P(z) : P\in \mathcal{U}_\kappa\}
$$
stabilises at $k_1+|\kappa|+1$ in the sense that $l\in I_\kappa^+$ for all $l\geq k_1+|\kappa|+1$. Applying the same line of reasoning to the Laurent polynomials $P_{\kappa,-l}(1/z)$, we find that the same statement holds true for
$$
I_\kappa^-=\{\deg P(z^{-1}) : P\in \mathcal{U}_\kappa\}.
$$
Among the ELOPs $P_{\kappa,l}$ with $|l|<k_1+|\kappa|+1$, a maximal set of linearly independent Laurent polynomials is given by
\begin{multline*}
l\in\{k_1,\ldots,k_n\}\cup\{0,\pm 1,\ldots,\pm (k_n-1)\}\cup\{\pm(k_n+1),\ldots,\pm(k_{n-1}-1)\}\\
\cup\cdots\cup\{\pm(k_2+1),\ldots,\pm(k_1-1)\}.
\end{multline*}
The cardinality of this index set equals
$$
n+2k_n-1+2(k_{n-1}-k_n-1)+\cdots+2(k_1-k_2-1)=2k_1-n+1.
$$
Observing that
$$
2k_1+2|\kappa|+1-(2k_1-n+1)=2|\kappa|+n,
$$
we conclude that the codimension of $\mathcal{U}_\kappa$ in $\Lambda$ is $2|\kappa|+n$.

On the other hand, counting quasi-invariance conditions, we find that the codimension of $\mathscr{Q}_\kappa$ in $\Lambda$ equals $2|\kappa|$ and so the assertion follows.
\end{proof}

\begin{remark} In contrast to the  case of usual polynomials there are several definitions of the degree of a Laurent polynomial, but none of them are convenient for our purposes. Let us define the {\it $L$-degree} $Ldeg P$ of a Laurent polynomial $P=\sum_{i=p}^{q} c_i z^i$ with $c_p\neq 0, c_q\neq 0$ as $q$ if $q>-p,$ and $p$ if $q<-p.$
If $q=-p$ the $L$-degree is not well-defined since it could be both $p$ and $q.$
Under these assumptions  $$\text{Ldeg}\, P_{\kappa,l}=|\kappa|+l, \, l \in \mathbb Z_+\setminus \kappa, \quad\quad \text{Ldeg}\, P_{\kappa,l}=-|\kappa|+l, \, -l \in \mathbb Z_+\setminus {\kappa},$$ otherwise it is not well-defined. Note that the polynomials $P_{\kappa,k_j}$ and $P_{\kappa,-k_j}$ with undefined $L$-degrees are linearly dependent.
\end{remark}

%
%

Next, we consider a particular complex bilinear form on $\mathscr{Q}_\kappa$, given by \eqref{complex_euclid} with $W=\mathcal W_\kappa$, and establish corresponding Laurent orthogonality relations. A related Fourier theory for more general algebro-geometric operators was studied by Grinevich and Novikov in \cite{GN}.

\begin{definition}
Let $\mu\in\mathbb{R}_{>0}$ be such that
\begin{equation}
\label{mures}
\mu\neq |z_i|,\ \ \ \forall z_i\in Z_\kappa.
\end{equation}
Then, we define a complex bilinear form $(\cdot,\cdot)$ on $\mathscr{Q}_\kappa$ by setting
\begin{equation}
(P,Q)=\frac{1}{2\pi i}\oint_{C_\mu}P(z)Q(z)\mathcal W_\kappa^{-2}\frac{dz}{z},\ \ \ P,Q\in\mathscr{Q}_\kappa,
\end{equation}
where
\begin{equation}
\label{Cmu}
C_\mu=\{z\in\mathbb{C} : |z|=\mu\}.
\end{equation}
\end{definition}

Substituting $z=\exp(ix)$ and following the line of reasoning used in the proof of Lemma \ref{Lemma:indep}, we readily find that the product is well-defined in the sense that it does not depend on the choice of $\mu$. More precisely, we have the following lemma.

\begin{lemma}
For any $P,Q\in\mathscr{Q}_\kappa$, the value of $(P,Q)$ is independent of $\mu\in\mathbb{R}_{>0}$ provided \eqref{mures} is satisfied.
\end{lemma}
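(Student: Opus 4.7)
The plan is to transport the integral to the $x$-cylinder $\mathbb{C}/2\pi\mathbb{Z}$ via the substitution $z = e^{ix}$; under this change $dz/z = i\, dx$ and the contour $C_\mu$ becomes the horizontal line $L_\mu = \{x \in \mathbb{C}/2\pi\mathbb{Z} : \mathrm{Im}(x) = -\log\mu\}$. Setting $\Phi_P(x) = P(e^{ix})/\mathcal{W}_\kappa(e^{ix})$ and $\Phi_Q$ analogously, the integrand becomes $i\,\Phi_P(x)\Phi_Q(x)\, dx$, whose only poles on the cylinder are at the points $x_j \in X_\kappa$, each of order at most $2m_j$. This sets up a picture completely parallel to that of Proposition \ref{Lemma:indep}, with horizontal lines in the cylinder playing the role of the shifted real lines $i\xi+\mathbb{R}$.

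Given two admissible radii $\mu_1 < \mu_2$, I would apply Cauchy's theorem on the cylindrical strip bounded by $L_{\mu_1}$ and $L_{\mu_2}$. The difference $(P,Q)_{\mu_1}-(P,Q)_{\mu_2}$ is then proportional to $\sum_j \mathop{Res}_{x=x_j}\bigl(\Phi_P(x)\Phi_Q(x)\bigr)$, the sum taken over those $x_j \in X_\kappa$ whose imaginary parts lie strictly between $-\log\mu_2$ and $-\log\mu_1$, so the lemma reduces to showing that each such residue vanishes.

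This is precisely where the quasi-invariance hypothesis enters: by the very definition of $\mathscr{Q}_\kappa$, both $\Phi_P$ and $\Phi_Q$ are quasi-invariant at each $x_j \in X_\kappa$ with common multiplicity $m_j$, and Lemma \ref{Lemma:res}---whose proof is purely local and therefore applies verbatim in the $x$-variable---immediately yields $\mathop{Res}_{x=x_j}(\Phi_P\Phi_Q)=0$. I do not expect a serious obstacle; the argument is essentially a transcription of the proof of Proposition \ref{Lemma:indep}, the only bookkeeping being a check that the integrand has no hidden poles on the cylinder beyond $X_\kappa$, which follows from the fact that $P$, $Q$ and $\mathcal{W}_\kappa$ are Laurent polynomials and that $\mu$ satisfies \eqref{mures}.
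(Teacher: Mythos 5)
Your proposal is correct and follows exactly the route the paper intends: the paper's own proof consists of the single remark that one should substitute $z=\exp(ix)$ and repeat the argument of Proposition \ref{Lemma:indep}, which is precisely what you carry out — Cauchy's theorem on the strip between the two horizontal lines, reduction to residues at the points of $X_\kappa$, and the vanishing of each residue via Lemma \ref{Lemma:res} applied to the quasi-invariant functions $\Phi_P$, $\Phi_Q$. Your additional check that the integrand has no poles on the cylinder other than $X_\kappa$ is a sensible piece of bookkeeping that the paper leaves implicit.
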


We are now ready to state and prove the first of the main results in this section, which may be viewed as a natural analog of Theorem \ref{Thm:Horth}.

\begin{theorem}\label{Thm:Lorth}
The ELOPs $P_{\kappa,l}$ satisfy the Laurent orthogonality relation
$$
(P_{\kappa,j},P_{\kappa,l})=\delta_{j+l,0}\prod_{m=1}^n(l^2-k_m^2),\ \ \ j,l\in\mathbb{Z}.
$$
\end{theorem}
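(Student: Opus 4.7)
The plan is to argue by induction on $n=|\kappa|$, mirroring the structure of Theorem \ref{Thm:Horth} but with Hermite eigenfunctions replaced by exponentials and the real line integral replaced by a loop integral on the cylinder. I begin by changing variable $z=\exp(ix)$, which turns $C_\mu$ into a closed loop on $\mathbb{C}/2\pi\mathbb{Z}$ at height $\mathrm{Im}\,x=-\log\mu$. Since $P_{\kappa,l}(z)/\mathcal{W}_\kappa(z)=\phi_{\kappa,l}(x)$, the weight $\mathcal{W}_\kappa^{-2}$ cancels $\mathcal{W}_\kappa^2$ in the numerator, and the pairing rewrites as $(P_{\kappa,j},P_{\kappa,l})=\tfrac{1}{2\pi}\oint_C \phi_{\kappa,j}(x)\phi_{\kappa,l}(x)\,dx$. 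Setting $\hat{\kappa}=\{k_2,\ldots,k_n\}$ and using the factorisation $\mathscr{D}_\kappa=D_1\circ\mathscr{D}_{\hat{\kappa}}$ from \eqref{Dfact}, one obtains $\phi_{\kappa,l}=D_1\phi_{\hat{\kappa},l}$ up to an $l$-independent sign which squares away.

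The operator identity central to the induction step is $D_1^T D_1=\mathscr{L}_{\hat{\kappa}}-k_1^2$, where $D_1^T=-\tfrac{d}{dx}-(\log(\mathscr{W}_1/\mathscr{W}_2))'$ is the formal transpose of $D_1$ with respect to the unweighted bilinear pairing $\int fg\,dx$. This is the standard Darboux factorisation applied to the eigenfunction $\mathscr{W}_1/\mathscr{W}_2$ of $\mathscr{L}_{\hat{\kappa}}$ at eigenvalue $k_1^2$, and it is the exact Laurent analogue of $D_1^*D_1=\mathscr{L}_{\hat{\lambda}}-2k_1-1$ used in the Hermite proof. Since $C$ is a closed loop on the cylinder and the integrands are single-valued meromorphic functions regular on $C$ (provided $\mu$ is chosen to avoid the zeros of both $\mathcal{W}_\kappa$ and $\mathcal{W}_{\hat{\kappa}}$), integration by parts produces no boundary contribution. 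Substituting $\phi_{\kappa,l}=D_1\phi_{\hat{\kappa},l}$, transposing one $D_1$, and using $\mathscr{L}_{\hat{\kappa}}\phi_{\hat{\kappa},l}=l^2\phi_{\hat{\kappa},l}$ yields the recursion $(P_{\kappa,j},P_{\kappa,l})=(l^2-k_1^2)(P_{\hat{\kappa},j},P_{\hat{\kappa},l})_{\hat{\kappa}}$.

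The base case $n=0$ is immediate: $\mathcal{W}_\emptyset=1$ and $P_{\emptyset,l}(z)=z^l$, so a single residue calculation gives $(z^j,z^l)=\tfrac{1}{2\pi i}\oint_{C_\mu} z^{j+l-1}\,dz=\delta_{j+l,0}$. Iterating the recursion then produces the asserted product formula $\delta_{j+l,0}\prod_{m=1}^n(l^2-k_m^2)$.

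The main obstacle I expect is the careful verification of the Darboux factorisation $D_1^T D_1=\mathscr{L}_{\hat{\kappa}}-k_1^2$ together with the bookkeeping of the $l$-independent sign in $\phi_{\kappa,l}=\pm D_1\phi_{\hat{\kappa},l}$. A secondary, more technical point is ensuring that a single admissible $\mu$ can be chosen simultaneously for the nested sequence of Wronskians $\mathcal{W}_\kappa,\mathcal{W}_{\hat{\kappa}},\ldots,\mathcal{W}_\emptyset$ arising in the induction; this only requires avoiding finitely many critical radii and is therefore harmless.
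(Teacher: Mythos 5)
Your proposal is correct and follows essentially the same route as the paper: induction on $n$ with the trivial base case $(z^j,z^l)=\delta_{j+l,0}$, the substitution $z=e^{ix}$ to rewrite the pairing as $\frac{1}{2\pi}\oint\phi_{\kappa,j}\phi_{\kappa,l}\,dx$, and the Darboux factorisation $D_1^*D_1=\mathscr{L}_{\hat{\kappa}}-k_1^2$ to obtain the recursion $(P_{\kappa,j},P_{\kappa,l})=(l^2-k_1^2)(P_{\hat{\kappa},j},P_{\hat{\kappa},l})$. Your added care about choosing $\mu$ to avoid the zeros of all the nested Wronskians, and about the $l$-independent sign squaring away, are harmless refinements of points the paper leaves implicit.
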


\begin{proof}
Just as in the proof of Theorem \ref{Thm:Horth}, we note that the assertion holds true for $n=0$, and proceed by induction on the length $n$ of $\kappa$. Letting $\hat{\kappa}=(k_2,\ldots, k_n)$, we have
$$
(P_{\kappa,j},P_{\kappa,l})=\frac{1}{2\pi}\int_0^{2\pi}(D_1\phi_{\hat{\kappa},j})(x)(D_1\phi_{\hat{\kappa},l})(x)dx.
$$
Making use of the factorisation
\begin{equation}
\label{fact}
D_1^*D_1=\mathscr{L}_{\hat{\kappa}}-k_1^2,
\end{equation}
with
$$
D_1^*=-\frac{d}{dx}-\frac{d}{dx}\left(\log\frac{\mathscr{W}_1}{\mathscr{W}_2}\right)
$$
the (formal) adjoint of $D_1$, we deduce
$$
(P_{\kappa,j},P_{\kappa,l})=(l^2-k_1^2)(P_{\hat{\kappa},j},P_{\hat{\kappa},l}),
$$
which completes the induction step.
\end{proof}

\begin{remark}
Having started from an eigenvalue problem with doubly degenerate eigenvalues, we have that $(P_{\kappa,l},P_{\kappa,-l})=0$ for some of the ELOPs $P_{\kappa,l}$. More specifically, it is evident from the theorem that this is the case if and only if $l=\pm k_m$, $m=1,\ldots,n$.
\end{remark}

Expanding on the result of Proposition \ref{Prop:Lcodim}, we proceed to establish the precise relationship between $\mathcal{U}_\kappa$ and $\mathscr{Q}_\kappa$. We begin with a general definition.

Let $V$ be a vector space over $\mathbb C.$ Then $V$ is called {\it complex Euclidean space} if it is equipped with a non-degenerate bilinear form $B:V\otimes V\to\mathbb{C}.$

\begin{definition}\label{Def:mext}
Let $W\subset V$ be a subspace of complex Euclidean space $V.$ We say that $V$ is a minimal complex Euclidean extension of $W$ if
$$
\dim\big(\ker B\arrowvert_W\big)=\mathrm{codim}_V W.
$$
\end{definition}

For any linear space $W$ and bilinear form $B$ with non-trivial kernel
$$
K:=\ker B,
$$
it is readily verified that there is a unique (up to isomorphisms) minimal complex Euclidean extension $V\supset W$. Letting $K^*$ denote the dual space of $K$, it can be realised as follows:
$$
V=K\oplus K^*\oplus W/K,
$$
with the extension of $B$ determined by
$$
(k_1+\hat{k}_1+w_1,k_2+\hat{k}_2+w_2)\mapsto \hat{k}_2(k_1)+\hat{k}_1(k_2)+B(w_1,w_2),
$$
where $k_1,k_2\in K$, $\hat{k}_1,\hat{k}_2\in K^*$ and $w_1,w_2\in W$. Moreover, for each basis $k_1,\ldots,k_n\in K$, there is a unique basis $\hat{k}_1,\ldots,\hat{k}_n\in K^*$ such that $(k_j,\hat{k}_l)=\delta_{jl}$.

\begin{example}
Suppose that $B\arrowvert_W=0$, so that each vector $w\in W$ is isotropic. Then we have
$$
V\cong W\oplus W^*,
$$
with
$$
B(w_1+\hat{w}_1,w_2+\hat{w}_2)=\hat{w}_2(w_1)+\hat{w}_1(w_2),\ \ \ w_1,w_2\in W,\ \ \hat{w}_1,\hat{w}_2\in W^*.
$$
\end{example}

As demonstrated by the following proposition, the inclusion $\mathcal{U}_\kappa\subset\mathscr{Q}_\kappa$ provides a concrete example of a minimal complex Euclidean extension in the sense of Definition \ref{Def:mext}.

\begin{proposition}\label{Prop:Lext}
$\mathscr{Q}_\kappa$ is the minimal complex Euclidean extension of $\mathcal{U}_\kappa$.
\end{proposition}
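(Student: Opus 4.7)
My plan is to verify the two defining conditions of Definition \ref{Def:mext}: (i) that the form $B = (\cdot,\cdot)$ is non-degenerate on $\mathscr{Q}_\kappa$, so that $\mathscr{Q}_\kappa$ qualifies as a complex Euclidean space, and (ii) that $\dim\ker(B|_{\mathcal{U}_\kappa}) = \mathrm{codim}_{\mathscr{Q}_\kappa}\mathcal{U}_\kappa$. For (ii), I would first pin down $K := \ker(B|_{\mathcal{U}_\kappa})$. In view of the linear dependencies $P_{\kappa,-k_j} \propto P_{\kappa,k_j}$ noted earlier in this section, a basis of $\mathcal{U}_\kappa$ is given by $\{P_{\kappa, l}\}_{l \in S}$ with $S := \mathbb{Z}\setminus\{-k_1,\ldots,-k_n\}$. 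The Laurent orthogonality relation of Theorem \ref{Thm:Lorth} then yields a Gram matrix on this basis that decomposes into non-degenerate antidiagonal $2\times 2$ blocks for pairs $\{l,-l\}$ with $l \notin \{\pm k_1,\ldots,\pm k_n\}$, a non-zero scalar at $l=0$, and rows/columns identically zero for $l \in \{k_1,\ldots,k_n\}$ (since the only potential non-zero partner $P_{\kappa,-k_j}$ lies outside the basis). Hence $K = \mathrm{span}\{P_{\kappa,k_1},\ldots,P_{\kappa,k_n}\}$ has dimension exactly $n$, matching $\mathrm{codim}_{\mathscr{Q}_\kappa}\mathcal{U}_\kappa = n$ from Proposition \ref{Prop:Lcodim}.

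For (i) I would introduce the test family
$$
Q_{\kappa, l}(z) := \mathcal{W}_\kappa^2(z)\,z^l, \quad l \in \mathbb{Z}.
$$
The crucial claim is that $Q_{\kappa, l} \in \mathscr{Q}_\kappa$: pulling back via $z=\exp(ix)$ gives $Q_{\kappa,l}/\mathcal{W}_\kappa = \mathcal{W}_\kappa(e^{ix})\, e^{ilx}$, which vanishes to order $m_i$ at each $x_i \in X_\kappa$, so multiplying by $(x-x_i)^{m_i}$ produces a function with a zero of order $2m_i$ at $x_i$. Every derivative of order strictly less than $2m_i$ then vanishes automatically, in particular the odd-order ones required by quasi-invariance. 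Given $v \in \mathscr{Q}_\kappa$ with $(v,R)=0$ for every $R \in \mathscr{Q}_\kappa$, specialising to $R=Q_{\kappa,l}$ collapses the form to
$$
0 = (v, Q_{\kappa, l}) = \frac{1}{2\pi i}\oint_{C_\mu} v(z)\, z^{l-1}\, dz = [z^{-l}]\,v(z), \quad l \in \mathbb{Z},
$$
which forces $v \equiv 0$ and proves $B$ non-degenerate on $\mathscr{Q}_\kappa$.

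I expect the main obstacle to be the verification that the test family $Q_{\kappa,l}$ lies in $\mathscr{Q}_\kappa$; once the quasi-invariance of these polynomials is secured, the remainder is Laurent coefficient extraction, while the orthogonality bookkeeping for the kernel step is already codified by Theorem \ref{Thm:Lorth}. The overall strategy parallels the density proof in the Hermite case (Theorem \ref{Thm:Hdensity}), with $\mathcal{W}_\kappa^2 z^l$ taking the role previously played by the polynomials $W_\lambda^2 H_l$.
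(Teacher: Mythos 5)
Your proof is correct and follows essentially the same route as the paper: identify $\ker B|_{\mathcal{U}_\kappa}=\langle P_{\kappa,k_1},\ldots,P_{\kappa,k_n}\rangle$ from Theorem \ref{Thm:Lorth}, match its dimension $n$ against the codimension from Proposition \ref{Prop:Lcodim}, and establish non-degeneracy on $\mathscr{Q}_\kappa$ by pairing against the quasi-invariant test family $\mathcal{W}_\kappa^2 z^l$. If anything, you are slightly more complete than the paper on the last step, since you carry out the Laurent-coefficient extraction for an arbitrary $v\in\mathscr{Q}_\kappa$ and verify explicitly that $\mathcal{W}_\kappa^2 z^l$ is quasi-invariant, whereas the paper only displays one instance of this pairing.
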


\begin{proof}
From Theorem \ref{Thm:Lorth} we infer that
$$
\ker (\cdot,\cdot)\arrowvert_{\mathcal{U}_\kappa}=\langle P_{\kappa,k_j} : j=1,\ldots,n\rangle.
$$
(Note the linear relations $a_jP_{\kappa,k_j}+a_j^{-1}P_{\kappa,-k_j}=0$.) Since ELOPs $P_{\kappa,l}$ corresponding to different values of $l^2$, and hence different eigenvalues, are linearly independent, it follows that
$$
\dim\big(\ker (\cdot,\cdot)\arrowvert_{\mathcal{U}_\kappa}\big)=n.
$$
Recalling Proposition \ref{Prop:Lcodim}, we see that it remains only to verify that $(\cdot,\cdot)$ is non-degenerate on $\mathscr{Q}_\kappa$. Observing that
$$
\mathcal W_\kappa^2(z)z^j\in\mathscr{Q}_\kappa,\ \ \ \forall j\in\mathbb{Z},
$$
this follows, e.g., from the computation
\begin{align*}
\big(P_{\kappa,l},\mathcal W_\kappa^2(z)z^{-l-|\kappa|}\big) &= \frac{1}{2\pi i}\oint_{C_\mu}P_{\kappa,l}(z)z^{-l-|\kappa|}\frac{dz}{z}\\
&= \det V(l,k_1,\ldots,k_n)\prod_{j=1}^nk_j,
\end{align*}
which is non-zero as long as $l\neq\pm k_j$, cf.~the proof of Proposition \ref{Prop:Lcodim}.
\end{proof}

\subsection{The Hermitian case}

In the case when all $\theta_k$ are real or, equivalently, when parameters $a=(a_1,\ldots,a_n)$ satisfy
\begin{equation}
\label{akres}
|a_k|=1,\ \ \ k=1,\ldots,n,
\end{equation}
we can introduce the Hermitian structure as follows.

Note that in this case the weight function $w(z)=\mathcal W_\kappa(z)^{-2}$ is invariant under the antilinear involution
\begin{equation}
\label{inv}
P^\dagger(z):=\overline{P(1/\bar{z})},\ \ \ P\in\Lambda,
\end{equation}
which will play much the same role as the Schwartz conjugate did in the Hermite case. In fact, observing that $(DP)^\dagger=-DP^\dagger$ and that $\Phi_k^\dagger=\Phi_k$, we can deduce from \eqref{hW} that
\begin{equation}
\label{Winv}
\mathcal W_\kappa^\dagger(z)=(-1)^{n(n-1)/2}\mathcal W_\kappa(z),\ \ \ \kappa=\{k_1,\ldots, k_n\}.
\end{equation}
In addition, the zero set $Z_\kappa$ becomes invariant under the involution $z\to 1/\bar{z}$, i.e.
$$
z_i\in Z_\kappa\implies 1/\bar{z_i}\in Z_\kappa,
$$
and, since $z=1/\bar{z}$ whenever $|z|=1$, we have that
\begin{equation}
\label{Wprod}
\mathcal W_\kappa(z)\mathcal W_\kappa^\dagger(z)=|\mathcal W_\kappa|^2,\ \ \ |z|=1.
\end{equation}

Letting $Z_\kappa^C=\{z_i\in Z_\kappa : |z|=1\}$ and $X_\kappa^{\mathbb R}=\{x_j: \exp(ix_j)=z_j, \,\, z_j\in Z_\kappa^C\} \subset \mathbb R$, we introduce the following subspace of quasi-invariant Laurent polynomials:
$$
\mathscr{Q}_{\kappa,C}
=\left\lbrace P\in\Lambda : \Phi(x):=(P/\mathcal W_\kappa)(\exp(ix))~\textrm{is quasi-invariant at all}~\,\, x_j\in X_\kappa^{\mathbb R}\right\rbrace.
$$
From \eqref{Winv}, it is straightforward to infer that
$$
\mathscr{Q}_\kappa^\dagger=\mathscr{Q}_\kappa,\ \ \ \mathscr{Q}_{\kappa,C}^\dagger=\mathscr{Q}_{\kappa,C},
$$
which allows us to define a natural sesquilinear product on $\mathscr{Q}_{\kappa,C}$.

\begin{definition}\label{Def:Lprod}
Assuming that \eqref{akres} holds true, we introduce
$$
\nu=\min_{\substack{z_i\in Z_\kappa\\ |z_i|>1}}|z_i|,
$$
and let $\mu\in\mathbb{R}_{>0}$ be such that
\begin{equation}
\label{mures2}
1<\max(\mu,1/\mu)<\nu.
\end{equation}
Then, we define a sesquilinear product $\langle\cdot,\cdot\rangle_L$ on $\mathscr{Q}_{\kappa,C}$ by setting
\begin{equation}
\label{Lprod}
\langle P,Q\rangle_L=\frac{1}{2\pi i}\oint_{C_\mu}P(z)Q^\dagger(z)\big(\mathcal W_\kappa(z)\mathcal W_\kappa^\dagger(z)\big)^{-1}\frac{dz}{z},\ \ \ P,Q\in\mathscr{Q}_{\kappa,C}.
\end{equation}
\end{definition}

Again, the product does not depend on the specific choice of $\mu$.

\begin{lemma}\label{Lemma:indep2}
For any $P,Q\in\mathscr{Q}_{\kappa,C}$, the value of $(P,Q)_L$ is independent of $\mu\in\mathbb{R}_{>0}$ provided \eqref{mures2} is satisfied.
\end{lemma}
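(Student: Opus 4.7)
The plan is to adapt the strategy used in the proof of Proposition \ref{Lemma:indep}. First, I would pass to the variable $x$ via $z=e^{ix}$, under which the contour $C_\mu$ becomes a horizontal contour $\gamma_\xi$ of height $\xi=-\ln\mu$ in the cylinder $\mathbb{C}/2\pi\mathbb{Z}$. Writing $\Phi(x)=(P/\mathcal W_\kappa)(e^{ix})$ and $\Psi(x)=(Q/\mathcal W_\kappa)(e^{ix})$, a direct computation using $dz/z = i\,dx$ together with the identity $(Q^\dagger/\mathcal W_\kappa^\dagger)(e^{ix}) = \overline{\Psi(\bar x)}$ (immediate from the definition \eqref{inv} of the involution and $1/\bar z = e^{i\bar x}$) brings the product \eqref{Lprod} into the form
$$
\langle P,Q\rangle_L=\frac{1}{2\pi}\int_{\gamma_\xi}\Phi(x)\,\overline{\Psi(\bar x)}\,dx.
$$
Under this substitution, independence of $\mu$ becomes independence of $\xi\in(-\ln\nu,\ln\nu)\setminus\{0\}$.

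Next I would locate the singularities of the integrand: those of $\Phi$ occur at points $x_j\in X_\kappa$, and those of $\overline{\Psi(\bar x)}$ at $\bar{x_j}$. Because $|{\rm Im}\,x_j|=|\ln|z_j||$, the hypothesis \eqref{mures2} together with the $\dagger$-invariance of $Z_\kappa$ force all singularities with $|z_j|\neq 1$ to lie strictly outside the open strip $|{\rm Im}\,x|<\ln\nu$ that contains $\gamma_\xi$. Hence, whenever $\xi_1$ and $\xi_2$ have the same sign, the contours $\gamma_{\xi_1}$ and $\gamma_{\xi_2}$ are homotopic in this strip minus the real singularities, and Cauchy's theorem delivers the equality of the corresponding integrals. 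The only nontrivial case is $\xi_1>0>\xi_2$, in which the residue theorem reduces matters to verifying
$$
\mathop{Res}_{x=x_j}\bigl(\Phi(x)\,\overline{\Psi(\bar x)}\bigr)=0,\qquad\forall\, x_j\in X_\kappa^{\mathbb R}.
$$

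Finally I would invoke Lemma \ref{Lemma:res}. By hypothesis $\Phi$ is quasi-invariant at each $x_j\in X_\kappa^{\mathbb R}$. For the second factor, a short Taylor-series check at the real point $x_j$ shows that the antiholomorphic involution $f(x)\mapsto\overline{f(\bar x)}$ preserves quasi-invariance: the Taylor coefficients of $\overline{f(\bar x)}(x-x_j)^{m_j}$ are obtained by complex-conjugating those of $f(x)(x-x_j)^{m_j}$, so the vanishing of the odd-order coefficients is retained. Hence $\overline{\Psi(\bar x)}$ is also quasi-invariant at $x_j$ with the same multiplicity $m_j$, and Lemma \ref{Lemma:res} forces each residue to vanish. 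I expect the main technical obstacle to be precisely this verification that quasi-invariance is stable under $f\mapsto\overline{f(\bar x)}$; once it is in hand, the remainder of the argument is a direct parallel of Proposition \ref{Lemma:indep}.
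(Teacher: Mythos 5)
Your proof is correct and follows essentially the same route the paper intends: the paper gives no separate argument for this lemma, instead referring back to the substitution $z=e^{ix}$ and the residue computation of Proposition \ref{Lemma:indep} via Lemma \ref{Lemma:res}, which is exactly what you carry out. Your explicit check that the antiholomorphic involution $f(x)\mapsto\overline{f(\bar x)}$ preserves quasi-invariance at real points (conjugating Taylor coefficients preserves the vanishing of the odd-order ones) is the one detail the paper leaves implicit, and it is right.
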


By adapting the proof of Proposition \ref{Prop:Hprod}, we can use the lemma to show that Definition \ref{Def:Lprod} yields a Hermitian product.

\begin{proposition}
The sesquilinear product $\langle\cdot,\cdot\rangle_L$ is Hermitian:
$$
\langle P,Q\rangle_L=\overline{\langle Q,P\rangle_L},\ \ \ \forall P,Q\in\mathscr{Q}_{\kappa,C}.
$$
\end{proposition}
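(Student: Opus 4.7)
The plan is to mimic the proof of Proposition \ref{Prop:Hprod}, replacing the Schwartz conjugation $\bar q(z)=\overline{q(\bar z)}$ by the involution $P^\dagger(z)=\overline{P(1/\bar z)}$, and the contour shift $\xi\mapsto -\xi$ by the radius inversion $\mu\mapsto 1/\mu$. As a first step I would parameterize $C_\mu$ by $z=\mu e^{i\theta}$, so that $dz/z=i\,d\theta$, and rewrite
$$
\langle P,Q\rangle_{L,\mu}=\frac{1}{2\pi}\int_0^{2\pi}P(\mu e^{i\theta})\,Q^\dagger(\mu e^{i\theta})\,\bigl(\mathcal W_\kappa\mathcal W_\kappa^\dagger\bigr)^{-1}(\mu e^{i\theta})\,d\theta,
$$
where a subscript indicates the choice of radius.

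The heart of the argument is a termwise comparison of $\langle P,Q\rangle_{L,\mu}$ and $\overline{\langle Q,P\rangle_{L,1/\mu}}$. Unpacking the definition of $\dagger$ at points of the reciprocal contour gives the three identities $\overline{Q((1/\mu)e^{i\theta})}=Q^\dagger(\mu e^{i\theta})$, $\overline{P^\dagger((1/\mu)e^{i\theta})}=P(\mu e^{i\theta})$ (using $(P^\dagger)^\dagger=P$), and, by \eqref{Winv}, $(\mathcal W_\kappa\mathcal W_\kappa^\dagger)^\dagger=\mathcal W_\kappa\mathcal W_\kappa^\dagger$, so that $\overline{(\mathcal W_\kappa\mathcal W_\kappa^\dagger)((1/\mu)e^{i\theta})}=(\mathcal W_\kappa\mathcal W_\kappa^\dagger)(\mu e^{i\theta})$. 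Substituting these into the conjugate of the integral for $\langle Q,P\rangle_{L,1/\mu}$ yields exactly the integrand of $\langle P,Q\rangle_{L,\mu}$, establishing
$$
\langle P,Q\rangle_{L,\mu}=\overline{\langle Q,P\rangle_{L,1/\mu}}.
$$

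To conclude, I would note that condition \eqref{mures2} is manifestly invariant under $\mu\leftrightarrow 1/\mu$ (since $\max(\mu,1/\mu)=\max(1/\mu,\mu)$), so whenever $\mu$ is admissible so is $1/\mu$. Lemma \ref{Lemma:indep2} then gives $\langle Q,P\rangle_{L,1/\mu}=\langle Q,P\rangle_{L,\mu}$, and combining this with the identity above proves hermiticity.

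The main obstacle is conceptually light but notationally fiddly: keeping straight which variable the $\dagger$ and the complex conjugate act on when one reparameterizes the circle of radius $1/\mu$. I would also emphasize that restricting to $\mathscr Q_{\kappa,C}$ is what makes Lemma \ref{Lemma:indep2} applicable, because only on this subspace do the residues at the unimodular zeros of $\mathcal W_\kappa$ vanish, allowing the contour deformation between radii on opposite sides of the unit circle.
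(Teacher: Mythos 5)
Your proposal is correct and follows essentially the same route as the paper: establish $\langle P,Q\rangle_{L,\mu}=\overline{\langle Q,P\rangle_{L,1/\mu}}$ by parameterizing the circle and unpacking the involution $\dagger$, then invoke the $\mu$-independence of Lemma \ref{Lemma:indep2} (valid since \eqref{mures2} is symmetric under $\mu\leftrightarrow 1/\mu$). Your pointwise matching of integrands at the same angle $\theta$ is a slightly cleaner bookkeeping than the paper's, but the argument is the same.
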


\begin{proof}
Using the notation
$$
w(z)=1\big/\mathcal W_\kappa(z)\mathcal W_\kappa^\dagger(z)
$$
and using a subscript to indicate the choice of $\mu$ in \eqref{Lprod}, we deduce the following equalities:
\begin{align*}
\langle P,Q\rangle_{L,\mu} &=\frac{1}{2\pi}\int_0^{2\pi}P(\mu e^{i\varphi})\bar{Q}(\mu^{-1}e^{i\varphi})w(\mu e^{i\varphi})d\varphi\\
&=\frac{1}{2\pi}\overline{\int_0^{2\pi}\bar{P}(\mu e^{-i\varphi})Q(\mu^{-1}e^{i\varphi})w(\mu^{-1}e^{i\varphi})d\varphi}\\
&=\overline{\langle Q,P\rangle_{L,\mu^{-1}}},
\end{align*}
and so hermiticity follows from Lemma \ref{Lemma:indep2}.
\end{proof}

Moreover, the proof of Theorem \ref{Thm:Lorth} is readily adapted to yield the following orthogonality result.

\begin{theorem}
Assuming that \eqref{akres} holds true, the ELOPs $P_{\kappa,l}$ satisfy the orthogonality relation
\begin{equation}
\label{Lorth}
\langle P_{\kappa,j},P_{\kappa,l}\rangle_L=\delta_{jl}\prod_{m=1}^n(k_m^2-l^2),\ \ \ j,l\in\mathbb{Z}.
\end{equation}
\end{theorem}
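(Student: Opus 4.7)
The strategy is to reduce the Hermitian orthogonality relation to the bilinear one of Theorem~\ref{Thm:Lorth} by exploiting the $\dagger$-involution. The key auxiliary fact is the identity
$$
P_{\kappa,l}^\dagger=(-1)^{n(n+1)/2}P_{\kappa,-l},
$$
which I would derive by applying $\dagger$ row-by-row to the determinantal formula \eqref{excL}. The hypothesis $|a_k|=1$ ensures $\Phi_{k_j}^\dagger=\Phi_{k_j}$, and the rule $(DP)^\dagger=-DP^\dagger$ contributes a factor of $(-1)^r$ to the $r$-th row while simultaneously converting $(D^r z^l)^\dagger$ into $(-1)^r D^r z^{-l}$. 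Collecting these row signs yields the overall factor $(-1)^{0+1+\cdots+n}=(-1)^{n(n+1)/2}$, and the last column becomes exactly the one defining $P_{\kappa,-l}$.

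Combining this with \eqref{Winv} and substituting into Definition~\ref{Def:Lprod} gives
$$
\langle P_{\kappa,j},P_{\kappa,l}\rangle_L=(-1)^{n(n+1)/2+n(n-1)/2}\,\frac{1}{2\pi i}\oint_{C_\mu}P_{\kappa,j}(z)P_{\kappa,-l}(z)\,\mathcal W_\kappa(z)^{-2}\,\frac{dz}{z}=(-1)^n\bigl(P_{\kappa,j},P_{\kappa,-l}\bigr),
$$
where $(\cdot,\cdot)$ denotes the bilinear form of Theorem~\ref{Thm:Lorth}. Here one uses that both forms are independent of the choice of $\mu$ in their admissible ranges (Lemma~\ref{Lemma:indep2} and its bilinear counterpart), so a common contour suffices. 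Applying Theorem~\ref{Thm:Lorth} at the index pair $(j,-l)$ gives $(P_{\kappa,j},P_{\kappa,-l})=\delta_{j-l,0}\prod_m(l^2-k_m^2)=\delta_{jl}\prod_m(l^2-k_m^2)$, and multiplication by $(-1)^n$ converts the product to $\prod_m(k_m^2-l^2)$, yielding the claimed identity.

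The main technical step I would expect to be the cleanest-to-sketch-but-easiest-to-slip-a-sign-in part is the $\dagger$-identity for $P_{\kappa,l}$, which reduces to a careful cofactor expansion once the interplay between $\dagger$ and $D$ is understood. An equivalent alternative would be to mimic the induction of Theorem~\ref{Thm:Lorth}'s proof directly: the reality of $\theta_k$ makes the Darboux operators $D_m$ real, so the Hermitian adjoint coincides with the formal adjoint $D_1^*$ and the factorisation $D_1^*D_1=\mathscr{L}_{\hat\kappa}-k_1^2$ applies verbatim, producing the same recursion $\langle P_{\kappa,j},P_{\kappa,l}\rangle_L=(l^2-k_1^2)\langle P_{\hat\kappa,j},P_{\hat\kappa,l}\rangle_L$ up to an overall sign per step coming from the same $\dagger$-identities above.
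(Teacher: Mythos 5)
Your proposal is correct, but it takes a genuinely different route from the paper. The paper re-runs the induction on $n$ used for Theorem~\ref{Thm:Lorth}: after substituting $z=\exp(ix)$ it uses the identity $\overline{\mathscr{W}_m}(-x)=(-1)^{(n-m)(n-m+1)/2}\mathscr{W}_m(x)$ to rewrite $\langle P_{\kappa,j},P_{\kappa,l}\rangle_L$ as $\frac{1}{2\pi}\int_0^{2\pi}(D_1\phi_{\hat\kappa,j})(x)\overline{(D_1\phi_{\hat\kappa,l})}(-x)\,dx$, integrates by parts, and invokes $D_1^*D_1=\mathscr{L}_{\hat\kappa}-k_1^2$ to get the recursion $\langle P_{\kappa,j},P_{\kappa,l}\rangle_L=(k_1^2-l^2)\langle P_{\hat\kappa,j},P_{\hat\kappa,l}\rangle_L$ --- essentially your ``alternative'' second route. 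Your primary route instead proves the algebraic identity $P_{\kappa,l}^\dagger=(-1)^{n(n+1)/2}P_{\kappa,-l}$ (which checks out: $\dagger$ is a determinant-compatible antilinear ring automorphism of $\Lambda$, each row contributes $(-1)^r$, and $(z^l)^\dagger=z^{-l}$), combines it with \eqref{Winv} to get $\langle P_{\kappa,j},P_{\kappa,l}\rangle_L=(-1)^{n^2}(P_{\kappa,j},P_{\kappa,-l})=(-1)^n(P_{\kappa,j},P_{\kappa,-l})$, and then simply quotes Theorem~\ref{Thm:Lorth}; the contour compatibility is fine since any $\mu$ satisfying \eqref{mures2} also satisfies \eqref{mures} (by the $z\mapsto 1/\bar z$ symmetry of $Z_\kappa$ under \eqref{akres}), and in fact the integrands agree identically once \eqref{Winv} is applied. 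This reduction is arguably cleaner: it avoids redoing the induction, makes the overall sign $(-1)^n$ (hence the reversal $\prod(l^2-k_m^2)\to\prod(k_m^2-l^2)$) completely transparent, and exposes the structural fact that the Hermitian form is the bilinear form precomposed with $l\mapsto -l$; what the paper's induction buys is a self-contained derivation exhibiting the factor $(k_1^2-l^2)$ step by step without needing the $\dagger$-identity for the full Wronskian determinant with the extra column.
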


\begin{proof}
Taking $z=\exp(ix)$ in the integral in \eqref{Lprod} and observing that (cf.~\eqref{Winv})
$$
\overline{\mathscr{W}_m}(-x)=(-1)^{(n-m)(n-m+1)/2}\mathscr{W}_m(x),
$$
we establish the equalities
\begin{align*}
\langle P_{\kappa,j},P_{\kappa,l}\rangle_L &=\frac{1}{2\pi}\int_0^{2\pi}(D_1\phi_{\hat{\kappa},j})(x)\overline{(D_1\phi_{\hat{\kappa},l})}(-x)dx\\
&=-\frac{1}{2\pi}\int_0^{2\pi}\phi_{\hat{\kappa},j}(x)\overline{(D_1^*D_1\phi_{\hat{\kappa},l})}(-x)dx.
\end{align*}
Appealing to the factorisation \eqref{fact}, we thus obtain the relation
$$
\langle P_{\kappa,j},P_{\kappa,l}\rangle_L=(k_1^2-l^2)\langle P_{\hat{\kappa},j},P_{\hat{\kappa},l}\rangle_L,
$$
and the assertion follows by induction on $n$.
\end{proof}

After replacing the bilinear form $B$ by a Hermitian sesquilinear form $h$, Definition \ref{Def:mext} as well as the succeeding discussion applies with minor changes also in the present situation. Specifically, we say that $V$ is a {\it minimal  Hermitian extension} of $W$ if
$$
\dim\big(\ker h\arrowvert_W\big)=\mathrm{codim}_V W.
$$
Then, we have the following analog of Theorem \ref{Thm:Hdensity}.

\begin{theorem}\label{Thm:Ldensity}
The subspace $\mathscr{Q}_\kappa$, which is the minimal Hermitian extension of $\mathcal{U}_\kappa$, is dense in $\mathscr{Q}_{\kappa,C}$.
\end{theorem}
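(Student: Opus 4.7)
The plan is to mirror the density proof for Theorem~\ref{Thm:Hdensity}: produce an explicit family of test elements $\{Q_j\}_{j\in\mathbb Z}\subset\mathscr{Q}_\kappa$ such that pairing any $P\in\mathscr{Q}_{\kappa,C}$ against $Q_j$ in the form $\langle\cdot,\cdot\rangle_L$ extracts exactly one Laurent coefficient of $P$. Then the hypothesis $\langle P,Q\rangle_L=0$ for all $Q\in\mathscr{Q}_\kappa$ will immediately force every coefficient of $P$ to vanish and hence $P\equiv 0$.

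Concretely, I would take
$$
Q_j(z) = \mathcal W_\kappa(z)\,\mathcal W_\kappa^\dagger(z)\,z^{-j},\qquad j\in\mathbb Z,
$$
which are manifestly Laurent polynomials. The first step is to verify $Q_j\in\mathscr{Q}_\kappa$; this is the place where the hypothesis \eqref{akres} is crucial. By \eqref{Winv}, in the Hermitian setting $\mathcal W_\kappa^\dagger=\pm\mathcal W_\kappa$, so $\mathcal W_\kappa^\dagger$ vanishes to the same multiplicity $m_i$ at each $z_i\in Z_\kappa$ as $\mathcal W_\kappa$ does. Hence $Q_j/\mathcal W_\kappa=\mathcal W_\kappa^\dagger z^{-j}$ has a zero of order $m_i$ at every $z_i\in Z_\kappa$, and after the substitution $z=\exp(ix)$ the associated $\Phi$ vanishes to order $m_i$ at each $x_i\in X_\kappa$. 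Such vanishing trivially satisfies both defining conditions of quasi-invariance at $x_i$ with multiplicity $m_i$, since $(x-x_i)^{m_i}\Phi(x)$ then has a zero of order $2m_i$, killing all its derivatives up to order $2m_i-1$.

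The second step is the computation of the pairing. A direct check, using that $\dagger$ is an anti-involution on $\Lambda$ with $(z^{-j})^\dagger=z^j$ and $(\mathcal W_\kappa\mathcal W_\kappa^\dagger)^\dagger=\mathcal W_\kappa\mathcal W_\kappa^\dagger$, gives $Q_j^\dagger=\mathcal W_\kappa\mathcal W_\kappa^\dagger z^j$. Inserted into \eqref{Lprod} the weight $(\mathcal W_\kappa\mathcal W_\kappa^\dagger)^{-1}$ cancels exactly, leaving the Laurent polynomial integrand $P(z)z^{j-1}$, so
$$
\langle P,Q_j\rangle_L=\frac{1}{2\pi i}\oint_{C_\mu}P(z)\,z^{j-1}\,dz,
$$
which picks out the coefficient of $z^{-j}$ in the Laurent expansion of $P$. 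Vanishing of these pairings for every $j\in\mathbb Z$ yields $P\equiv 0$.

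The main (and really only nontrivial) step is the first one: showing each $Q_j$ lies in $\mathscr{Q}_\kappa$, rather than merely in the larger $\mathscr{Q}_{\kappa,C}$. The computations in the second step are purely formal; the actual content lies in exploiting the symmetry \eqref{Winv} to identify the zero set of $\mathcal W_\kappa^\dagger$ with that of $\mathcal W_\kappa$, which is exactly what makes these test functions quasi-invariant at all points of $X_\kappa$ and not only at those of $X_\kappa^{\mathbb R}$.
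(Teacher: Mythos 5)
Your proposal is correct and follows essentially the same route as the paper: the paper's proof tests $P$ against the very same family $Q_{\kappa,l}=\mathcal W_\kappa\mathcal W_\kappa^\dagger z^l\in\mathscr{Q}_\kappa$, cancels the weight, and extracts the Laurent coefficients of $P$ via $\frac{1}{2\pi i}\oint z^kz^l\frac{dz}{z}=\delta_{k+l,0}$. Your only addition is to spell out the membership $Q_j\in\mathscr{Q}_\kappa$ (which the paper declares ``clear''), and that verification via \eqref{Winv} is sound.
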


\begin{proof}
Suppose that $P\in\mathscr{Q}_{\kappa,C}$ is such that
$$
\langle P,Q\rangle_L=0,\ \ \ \forall Q\in\mathscr{Q}_\kappa.
$$
Since the Laurent polynomials
$$
Q_{\kappa,l}=\mathcal W_\kappa(z)\mathcal W_\kappa^\dagger(z)z^l,\ \ \ l\in\mathbb{Z},
$$
clearly are contained in $\mathscr{Q}_\kappa$, we have that
$$
0=\langle P,Q_{\kappa,l}\rangle_L=\frac{1}{2\pi i}\oint_{C_\mu}P(z)z^l\frac{dz}{z},\ \ \ \forall l\in\mathbb{Z}.
$$
Taking the limit $\mu\to 1$ and using the property that
$$
\frac{1}{2\pi i}\oint_{|z|=1}z^kz^l\frac{dz}{z}=\delta_{k+l,0},\ \ \ k,l\in\mathbb{Z}
$$
we conclude that $P\equiv 0$.
\end{proof}


\begin{remark}
It is known from the soliton theory that for every non-empty set $\kappa$ and any choice of real $\theta_k$ the corresponding potential always has singularities on the real line. This means that in the Laurent case we do not have non-trivial regular examples (unlike Hermite case with double partitions).
\end{remark}

\section{Concluding remarks}

We have discussed two complex versions of the exceptional orthogonal polynomials, related to two classes of monodromy-free Schr\"odinger operators. We would like to emphasize two novelties compared to the original approach of G\'omez-Ullate et al \cite{UKM-boch,UGM-EHP}.

First, in order to define the inner product in general we have to reduce the space of polynomials to the subspace of quasi-invariants, which has a finite codimension. The only exception is the Hermite case with double partitions considered in \cite{UGM-EHP}. 

Second, in the Laurent case the space of quasi-invariants is not generated by the corresponding exceptional Laurent polynomials, so we need to consider the minimal complex Euclidean extension.

In the rational case with sextic growth at infinity there are some partial results \cite{GV}, which lead to finite sets of orthogonal polynomials of the same degree. It would be interesting to analyse this situation in a view of recent very interesting paper by Felder and Willwacher \cite{FW}.

It would be interesting also to see what happens with exceptional orthogonal polynomials in the multidimensional case. One can use the monodromy-free generalised Calogero-Moser operators, playing an important role in the theory of Huygens principle \cite{CFV-Huygen}.
We plan to address this elsewhere soon.

\section{Acknowledgements.}

APV is grateful to Oleg Chalykh and Leon Takhtajan for very useful and encouraging discussions.

\end{document}